\DeclareMathOperator{\supp}{supp}
\newtheorem{theorem}{Theorem}
\newtheorem{lemma}[theorem]{Lemma}
\newtheorem{proposition}[theorem]{Proposition}
\newcommand{\ktlntwo}{kT\ln2}
\newcommand{\kconst}{\kappa}
\newcommand{\bitsToEntropy}[1]{\ln 2 \cdot #1}
\newcommand{\bitsToHeat}[1]{\ktlntwo \cdot #1}
\newcommand{\bitsToHeatParen}[1]{\ktlntwo \, [#1]}
\newcommand{\eplog}{\ln}
\newcommand{\boltz}[1]{e^{-#1/kT}}
\def\logZ{\ln Z}
\newcommand{\entropyToHeat}[1]{kT\,#1}
\newcommand{\entropyToHeatParen}[1]{kT\,[#1]}
\newcommand{\heatToEntropy}[1]{#1/{kT}}
\newcommand{\heatToBits}[1]{#1/(\ktlntwo)}
\newcommand{\uY}{m_Y}
\newcommand{\BB}{{\{0,1\}}^*}
\newcommand{\R}{{\mathbb{R}}}
\newcommand{\printycmd}{\text{``}\texttt{print `}y\texttt{`}\text{''}}
\def\termdef{\emph}
\def\dom{\mathrm{dom\,}}
\def\img{\mathrm{img\,}}
\def\N{\mathbb{N}}
\def\ft{t_f}
\def\EP{\Sigma}
\def\prior{w}
\def\sX{\mathcal{X}}
\def\pinit{p_X}
\def\pfin{p_Y}
\def\px{\pinit(\inp)}
\def\pfx{p_{f(X)}(f(x))}
\def\pcoin{\pinit^\mathrm{coin}}
\def\pfincoin{\pfin^\mathrm{coin}}
\def\Qcoin{\Q_\mathrm{coin}}
\def\Qopt{\Q_\mathrm{dom}}
\newcommand{\out}{y}
\newcommand{\inp}{x}
\newcommand{\Q}{Q}
\newcommand{\G}{G}
\newcommand{\T}{M}
\newcommand{\U}{U}
\newcommand{\func}[1]{\phi_{#1}}
\newcommand{\funcT}{\phi_M}
\newcommand{\funcU}{\phi_U}
\newcommand{\pXgY}{p_{X\vert f(X)}}
\newcommand{\priorXgY}{\prior_{X\vert f(X)}}
\newcommand{\pY}{p_{Y}}
\begin{document}

\title{Thermodynamic costs of Turing Machines}

 \author{Artemy Kolchinsky}
  \affiliation{Santa Fe Institute, 1399 Hyde Park Road, Santa Fe, NM 87501, USA}

 \author{David H. Wolpert}
\altaffiliation{Complexity Science Hub, Vienna}
 \altaffiliation{Arizona State University}
  \affiliation{Santa Fe Institute, 1399 Hyde Park Road, Santa Fe, NM 87501, USA}
 \affiliation{{\tt http://davidwolpert.weebly.com}}

\begin{abstract}
Turing Machines (TMs) are the canonical model of computation in computer science and physics.  We combine techniques from algorithmic information theory and stochastic thermodynamics to analyze the thermodynamic costs of TMs.  
We consider two different ways of realizing a given TM with a physical process. The first realization is designed to be thermodynamically reversible when fed with random input bits. The second realization is designed to generate less heat, up to an additive constant, than any  realization that is computable (i.e., consistent with the physical Church-Turing thesis). 
We consider three different thermodynamic costs: the heat generated when the TM is run on each \emph{input} (which we refer to as the ``heat function''), the minimum heat generated when a TM is run with an input that results in some desired \textit{output} (which we refer to as the ``thermodynamic complexity'' of the output, in analogy to the Kolmogorov complexity), and the expected heat on the input distribution that minimizes entropy production. For universal TMs, we show for both realizations that the thermodynamic complexity of any desired output is bounded by a constant (unlike the conventional Kolmogorov complexity), while the {expected} amount of generated heat is infinite.  
We also show that any computable realization faces a fundamental tradeoff between heat generation, the Kolmogorov complexity of its heat function, and the Kolmogorov complexity of its input-output map. We demonstrate this tradeoff by analyzing the thermodynamics of erasing a long string.
\end{abstract}

\maketitle

\section{Introduction}


%
%
%
%
%
%
%
%
%


The relationship between thermodynamics and information-processing has been an important area of research since at least the 1960s, when Landauer proposed   
%
that any process which erases a bit of information must release at least 
$\ktlntwo$ of heat into its environment~\cite{bril53,bril62,landauer1961irreversibility,szilard1964decrease,zure89a,zure89b,bennett1982thermodynamics,lloyd1989use,dunkel2014thermodynamics,roldan2014universal,lloyd2000ultimate,fredkin1990informational,toffoli1990invertible,leff2014maxwell,maroney2009generalizing,turgut_relations_2009}. %
This research has greatly benefited from the dramatic progress in
nonequilibrium statistical physics in the past few decades, in particular the development of 
trajectory-based and stochastic thermodynamics~\cite{van2013stochastic,van2015ensemble,seifert2012stochastic}.  These developments now permit us to quantify and analyze  heat, work, entropy production and other thermodynamic properties of individual trajectories in far-from-equilibrium systems. They have also    
have led to a much 
deeper understanding of the relationship between thermodynamics and information processing, 
%
%
%
%
 both for information erasure~\cite{berut2012experimental,diana2013finite,zulkowski2014optimal,jun2014high,ciliberto2017experiments,barato2014unifying} and other more elaborate computations~\cite{wiesner2012information,sagawa2012fluctuation,still2012thermodynamics,prokopenko2013thermodynamic,prokopenko2014transfer,roldan2014universal,koski2014experimental,parrondo2015thermodynamics,wolpert_book_2018,wolpert_thermo_comp_review_2019,Boyd2018thesis,strasberg2015thermodynamics,grochow_wolpert_sigact2018,riechers_thermo_comp_book_2018,wolpert_book_2018,ouldridge_thermo_comp_book_2018,strasberg2015thermodynamics}. 
%
%
%

%
%

In this paper we extend this line of research %
by deriving new results on
the thermodynamic costs of performing general computations, as formalized by the notion of \termdef{Turing machines} (TMs). 
A TM is %
an abstraction of a conventional modern computer, which run programs  written in a conventional programming language  
(\textit{C, Python}, etc.)~\cite{sipser2006introduction,hopcroft2000jd,livi08,grunwald2004shannon,arora2009computational,savage1998models}. %
A TM reads  an input string of arbitrary length (a ``program'') and runs until
it produces an output string. %
In the same way that any modern computer can simulate 
other computers (e.g., via an emulator), there exist an important class of TMs called \termdef{universal Turing Machines} (UTMs),
each of which is able to simulate the operation of any other TM. 

TMs are a keystone of the theory of computation~\cite{moore2011nature}, and
touch upon several foundational issues that lie at the intersection of
mathematics and philosophy, such as whether 
$\mathsf{P} = \mathsf{NP}$ and G{\"o}del's incompleteness theorems~\cite{aaronson2013philosophers}. %
Their importance is partly due to the celebrated \emph{Church-Turing thesis}, which postulates that any function that can be computed by a sequence of formal operations can also be computed by some TM~\cite{turing1948intelligent,church1937review,sep-computation-physicalsystems}.  For this reason, in computer science, a function is called \emph{computable} if and only if it can be carried out by a TM~\cite{livi08}.  TMs also play important roles in many facets %
of modern physics. 
For instance, TMs are used to formalize the difference
between easy and hard computational problems in quantum computing~\cite{cubitt2015undecidability,cubitt2012extracting,deutsch1985quantum,benioff1982quantum,nielsen2010quantum}. 
There has also been some speculative, broader-ranging work on whether the foundations
of physics may be restricted by some of the properties of TMs~\cite{barrow2011godel,aaro05}. 
Finally, there has been extensive investigation of the  
\emph{physical Church-Turing thesis}, which states that any function that can be implemented by a physical process 
can also be computed with a TM~\cite{gandy1980church,wolfram1985undecidability,deutsch1985quantum,geroch1986computability,nielsen1997computable,arrighi2012physical,piccinini2011physical,pitowsky1990physical,ziegler2009physically,cubitt2015undecidability,moore1990unpredictability,da1991undecidability,kanter1990undecidability,kieu2003computing,copeland2002hypercomputation}.

%
%
%
%
%
%
%
%
%
%
%
%
%
%
%

%
One of the most important concepts in the theory of TMs
is \emph{Kolmogorov complexity}. 
The Kolmogorov complexity of a string $y$, %
written as $K(y)$, is  the length of the shortest input program 
which causes a UTM to produce $y$ as the output (formal definitions are provided in \cref{sec:AIT}).
%
%
The Kolmogorov complexity of a string $y$ captures the amount of randomness in $y$, because a string with a non-random pattern can be produced
by a short input program. For example, the string containing the first billion digits of $\pi$ can be generated by 
running a very short program, and so has small Kolmogorov complexity. 
In contrast, for a random string $y$ without any patterns, the shortest program that  produces $y$ is a program of the type $\printycmd$, which has about the same length as $y$. 
An important variant of Kolmogorov complexity is the \emph{conditional Kolmogorov complexity} of $y$ given $x$, written $K(y\vert x)$, which is the length of the shortest program which causes a UTM to produce $y$ as output, when the UTM is provided with $x$ as an additional input. 
Kolmogorov and conditional Kolmogorov complexity have many formal connections with entropy and conditional entropy from Shannon's information theory~\cite{grunwald2004shannon}, and are studied in a field called 
\termdef{Algorithmic Information Theory} (AIT)~\cite{livi08,chaitin2004algorithmic}.
%
%
%
%
%
%
%

%
%
%
%
%
%
%
%
%
%
%
%
%
%
%
%
%
%
%
%
%
%
%
%
%
%
%
%
%


In this paper, we combine techniques from AIT and stochastic thermodynamics to analyze the thermodynamics of TMs.  
We imagine a discrete-state physical system %
that is coupled to a heat bath  at temperature $T$, and which evolves under the influence of a driving protocol. 
We identify the initial and final %
 states of the physical system with the logical inputs and outputs of some TM, so that the dynamics over the states of the physical system %
corresponds to a computation performed by the TM. 
We refer to a physical process that is consistent with the laws of thermodynamics and whose dynamics correspond to the input-output map of a TM as a \textit{realization} of that TM. 

%
%
%
%
%
%
%
%
%
%
%
%
%
%
%
%
%
%


We derive numerous results that concern the thermodynamic properties of realizations of TMs.  
The core underlying idea behind these results is that the \emph{logical properties} a given TM (such as the structure of the TM's input-output map, or the Kolmogorov complexity of its inputs and outputs) provide  constraints on the \emph{thermodynamic costs} incurred by realizations of that TM (such as the amount of heat those realizations generate).  Some of our results relate logical properties and thermodynamic costs at the ensemble level (i.e., relative to a probability distribution over computational trajectories of a TM), thereby building on the thermodynamic analysis initiated by Landauer and others.  In addition to these, many of our results also relate logical properties and thermodynamic costs at the level of individual computational trajectories (i.e., individual runs of the TM), which goes beyond most existing research on thermodynamics of computation.  

\subsection{Summary of results}

We investigate three different kinds of thermodynamic costs for a given realization of a TM:

\vspace{5pt}

\noindent (1) The amount of heat that is generated by running the realization of a given (univeral or non-universal) TM on each individual input $x$.
We refer to the map from inputs to their associated heat values %
as the  \textit{heat function} of the TM's realization, and write it as $Q(x)$. %

\vspace{5pt}

\noindent (2) %
The {minimal} amount of heat generated by running the realization of a 
given TM on some individual input that results in a desired output 
$y$. Here we assume that the TM is universal, so that it can in principle produce any output. 
This second cost is a function of the desired {output} $y$, rather than of the input $x$, and can be viewed as a 
thermodynamic analog of conventional Kolmogorov complexity. For this reason, we refer to this cost as  the \textit{thermodynamic complexity} of $y$.
\vspace{5pt}

\noindent (3) The ensemble-level expected heat  $\langle Q \rangle$ generated by the realization of a TM, %
evaluated for the input distribution that minimizes entropy production (EP). For this cost, we again focus on the case of universal TMs.
%
%
\vspace{5pt}

%

%
%
%
%
%

In general, there are many physical processes that are realizations of the same TM, which can have different thermodynamic costs from one another. 
In this paper we consider the above three thermodynamic costs for 
two important types of realizations. 
The first realization we consider, which is called the \textit{coin-flipping} realization, is constructed to be %
thermodynamically
reversible when input programs are sampled from the ``coin-flipping''
distribution $p(x) \propto 2^{-\ell(x)}$, where $\ell(x)$ indicates the length of string $x$. This input distribution arises by feeding random bits into a TM (hence its name) 
and plays a fundamental role in AIT.

We %
show that the heat function of the coin-flipping realization of a given TM is proportional to  $\ell(x)$ minus a ``correction term'' which reflects the logically irreversibility of the input-output map computed by the TM.  Importantly, when the realized TM is a universal TM $\U$, this correction term can be related to the Kolmogorov complexity of the output of $\U$ on input $x$.  In this case, the heat function 
is given by
\begin{align}
\Qcoin(x) =  \bitsToHeatParen{ \ell(\inp) - K(\funcU(x)) } + O(1) ,
\label{eq:3}
\end{align}
where  $\funcU(x)$ indicates the output of $U$ on input $x$, and 
 $O(1)$ indicates equality up to an additive constant independent of $x$ (see \cref{sec:notation} for a formal definition). 
%
Thus, up to an additive constant, the heat 
generated %
by running input $\inp$ on the coin-flipping realization of some UTM $\U$
is proportional to the \emph{excess} length of the input program $\inp$, over and above the length of the shortest program for $\U$ that
produces the same output as $\inp$.

It follows from \cref{eq:3} that if $x$ is the shortest program for $\U$ that produces output $\funcU(x)$, 
then $\Qcoin(x) = O(1)$. %
This means that by running the shortest program $x$ that produces some desired $y$ as output,  
one can produce that $y$ for an amount of heat that is bounded by a constant. 
Thus, the thermodynamic complexity for the coin-flipping realization %
is a {bounded} function, unlike the 
Kolmogorov complexity, which grows arbitrarily large~\cite{livi08}. 
%
%
%
%
%
On the other hand, we also show that when inputs are sampled from the coin-flipping distribution,  
the {expected} heat $\langle Q \rangle$ generated by the coin-flipping realization of a UTM is infinite. This holds even though 
the heat necessary to run the UTM on any given input $x$ is finite. 

The second realization we analyze is inspired by the physical Church-Turing thesis. 
To begin, we refer to a realization of a TM with heat function $\Q$ as a \emph{computable realization} if the function $x \mapsto Q(x)/kT$ is computable 
(i.e., there exist some TM that takes as input any desired $x$ and outputs the corresponding heat value $\Q(x)$ in units of $kT$). 
Under common interpretations of the physical Church-Turing thesis~\cite{deutsch1985quantum,wolfram1985undecidability,geroch1986computability,pitowsky1990physical,nielsen1997computable,sep-computation-physicalsystems}, 
any realization that is \emph{actually} constructable in the real-world must be  computable; in other words, 
a non-computable realization is a hypothetical physical process which does not violate any laws of thermodynamics, but which nonetheless cannot be constructed because of computational constraints.
Motivated by these observations, we define the so-called \emph{dominating realization} of a TM $\T$ to be 
 ``optimal'' in the following sense: the heat it generates on any input $x$ is smaller than the
heat generated by any computable realization of $\T$ on $x$, up to an additive constant which does not depend on $x$.\footnote{
Note that generating minimal heat is different from generating minimal EP. For example, the coin-flipping realization of a TM is thermodynamically
reversible for  the coin-flipping distribution over inputs $x$, and thus generates zero EP when run on inputs sampled from that distribution. However, that does not mean that it generates less heat on any particular input $x$, relative to the heat generated by another realization of the same TM on $x$.} 
The heat function of the dominating realization is proportional to the conditional Kolmogorov complexity of the output given the input,
\begin{align}
\Qopt(x) = \bitsToHeat{ K(x\vert \funcT(x)) } ,
\label{eq:dom1}
\end{align}
where $\funcT(x)$ indicates the output of TM $\T$ on input $x$. 
We show that this heat function is smaller than the heat function $\Q$ of any computable realization of $\T$, 
%
\begin{align}
\Qopt(x) \le \Q(x) + O(1) .
\label{eq:qoptintro}
\end{align}
Note that this result holds whether or not $\T$ is a UTM. 

%
%
%
%
%
%
%
%
%
%
%
%
%
%
%
%

 %

%
%
%
%
%
%
%
%
%
%
%
%
%
%
%
%
%
%
%
%
%
%
%
%
%
%
%
%
%
%
%
%
%
%

%
%
%
%
%
%
%
%

%
For the special case where $\T$ is a UTM, 
we show that for any desired output $y$, the thermodynamic complexity of $y$ under the dominating realization
is bounded by a constant that is independent of $y$, just like for the coin-flipping realization. 
Moreover, %
for the dominating realization 
there is a simple scheme for choosing the 
input $x$ 
that will produce any desired output $\out$ with a bounded amount of heat. 
This differs from the coin-flipping realization, where one must know the shortest program that generates $\out$ in order to produce $\out$ with a bounded amount of heat (in general, finding the shortest program to produce a given output $\out$ is  not computable). 
Finally, we consider the expected heat that is generated by the dominating realization, given some probability distribution over input programs. %
A natural input distribution to consider %
is the 
 one that minimizes the entropy production of the dominating realization. %
As for the coin-flipping realization, we show that the expected heat across inputs sampled from this distribution is infinite. 

%
There are two important caveats concerning the dominating realization.  First, while the dominating realization is better than any computable realization, in the sense of \cref{eq:qoptintro}, it itself is not computable. This is because its heat function is defined in terms of the conditional Kolmogorov complexity, which is not a computable function. Nonetheless, as we discuss below, one can always define a sequence of computable realizations whose heat functions approach $\Qopt$ from above. 
Thus, the dominating realization presents a fundamental bound on the heat generation of computable realizations, and this bound is achievable in the limit.


Second, for a given TM $\T$, \cref{eq:qoptintro} states that the heat generated by  the dominating realization on input $x$, $\Qopt(x)$, 
is 
smaller than the heat generated by any computable realization, $\Q(x)$, up to an additive constant that does not depend on $x$. 
This additive constant, however, can depend on the particular alternative realization of $\T$ that is being compared, i.e., on the choice of comparison heat function $\Q$. 
In fact, depending on the alternative realization, that additive constant can be arbitrarily large and negative. 
This means that for a given TM $\T$ and some particular choice of input program $x$, 
there may exist alternative realizations of $\T$ that generate arbitrarily less heat than the dominating realization.  
It turns out, however, %
that the difference between $\Qopt(x)$ and $\Q(x)$ is upper bounded by 
the sum of the Kolmogorov complexity of the input-output function $\funcT$ and the Kolmogorov complexity of the comparison heat function $\Q$.
Using this result, we show that any computable realization that produces output $y$ from  input $x$ faces a fundamental cost of $K(x\vert y)$, 
which can be paid either by producing a large amount of heat, by computing an input-output map with high complexity, or by having a heat function with high complexity. 


The paper is laid out as follows. In the following subsections, we review relevant prior work and introduce notation. 
In \cref{sec:Background-AIT},   
we define TMs and review some relevant results from AIT. %
In \cref{sec:physback}, we review the basics of statistical physics, and discuss how a TM can be implemented as a physical system.
We present our main results on the coin-flipping and dominating realizations in \cref{sec:EF_coin_flipping} and \cref{sec:EF_optimal}.   
In \cref{sec:tradeoffex}, we demonstrate the tradeoff between heat and complexity by analyzing the thermodynamics of erasing a long string. 
In the last section we discuss potential directions
for future research.

\subsection{Prior work on thermodynamics of TMs}
\label{sec:priorwork}

Some of the earliest work on the thermodynamics of TMs focused on TMs with deterministic and logically reversible
dynamics~\cite{benn73,bennett1989time}. 
Logically reversible TMs can perform computations without generating any heat or entropy production, at the cost of having to store additional information in their output, which logically irreversible TMs do not need to store. 
Due to the thermodynamic costs that would arise in re-initializing that extra stored information, there are some subtleties in calculating the thermodynamic cost of running a ``complete cycle'' of any logically reversible TM~\cite{wolpert_thermo_comp_review_2019}. (See also~\cite{sagawa2014thermodynamic,sagawa2019second} for a discussion of the  relationship between thermodynamic and logical reversibility.) Logically reversible TMs form a special subclass of TMs, and require special definitions of universality~\cite{morita_theory_2017}. In this work, we focus on the thermodynamics of general-purpose TMs, whose computations will generally be logically irreversible. However, we will sometimes also discuss how our results apply in the logically reversible case.

More
recently, \cite{strasberg2015thermodynamics} %
analyzed the thermodynamics of logically reversible TMs %
with stochastic  forward-backward dynamics along a computational trajectory, which 
causes %
the state of the TM %
to become more uncertain with time.\footnote{This kind of ``stochastic TM'' should not be confused with what are called ``nondeterministic
TMs'' or ``probabilistic TMs'' in the computer science literature~\cite{arora2009computational,sipser2006introduction}.} This model incurs %
non-zero entropy production, even though each computational trajectory encodes a logically reversible computation. 
Note that this entropy production could in principle be made arbitrarily small  by driving the TM forward with momentum (e.g., by coupling it to a large flywheel). 
In this work, we %
will ignore possible stochasticity in the progression of a TM along its computational trajectory.

Finally, there has been recent work which interprets the coin-flipping distribution over strings $x$, as defined in \cref{sec:EF_coin_flipping}, as a ``Boltzmann distribution''
induced by the ``energy function'' $\ell(x)$~\cite{baez2012algorithmic}. Doing this allows one 
to formulate a set of equations concerning TMs that
are formal analogs of Maxwell's relations for equilibrium thermodynamic systems.

In our own earlier work, we began to analyze the thermodynamic 
complexity of computing desired outputs, focusing on the coin-flipping realization and a three-tape UTM~\cite{wolpert_arxiv_beyond_bit_erasure_2015}.
We first showed explicitly how to construct a system that is thermodynamically reversible
for the coin-flipping distribution, and then derived the associated heat
function. We showed that 
for this realization,
the minimal amount of heat needed to
compute any given output $\out$ equals the Kolmogorov complexity of $\out$, plus what we characterized as a ``correction term''.
In other, more recent work, we rederived these results
using stochastic thermodynamics and single-tape machines~\cite{wolpert_book_review_chap_2019}.
In this paper, we extend this earlier work
on the coin-flipping realization. For simplicity, we
consider the thermodynamics of systems that implement the entire computation 
of a given UTM in some fixed time interval. (In contrast, our earlier work considered systems 
that implement a given UTM's update function iteratively, taking varying amounts of time
to halt, depending on the input to the UTM.) %
We then go further, and use Levin's Coding theorem to show 
that the thermodynamic complexity of the coin-flipping realization is bounded, 
even though the conventional Kolmogorov complexity function is not.
We also extend this earlier work 
by showing that the coin-flipping realization generates %
infinite {expected} heat when inputs are sampled from the coin-flipping distribution. 

The other main contributions of this paper concern the thermodynamic costs of the dominating realization.
These results are related to a series of ground-breaking  papers begun by
Zurek~\cite{zure89a,zure89b,li1992mathematical,bennett1993thermodynamics,bennett1998information,caves1990entropy,caves1993information,baumeler2019free,zurek1990algorithmic}.  
Those papers 
were generally written before the widespread adoption of trajectory-based analyses of thermodynamics~\cite{van2015ensemble}, and contained a semiformal 
argument
that computing an output string $y$ from an input $x$ has a minimal ``thermodynamic cost'' 
of at least $K(\inp\vert y)$. Even though that semiformal argument is quite different
from our analysis, the same  ``thermodynamic cost'' function also appears in our analysis of the dominating realization. 
We discuss connections between our results and this earlier work in more detail in \cref{sec:tradeoffex}. 

\subsection{Notation}
\label{sec:notation}
\def\ux{\mathbf{u}_X^{[x]}}

We use uppercase letters, such as $X$ and $Y$, to indicate random variables. We use  lowercase letters, like $x$ and $y$, to indicate their outcomes.  We use $p_X$ to indicate a probability distribution over random variable $X$, and $p_{X\vert Y}$ to indicate a conditional probability distribution of random variable $X$ given random variable $Y$. 
 We also use $p_{X|Y=y}$ to indicate the probability distribution of $X$ conditioned on one particular outcome $Y=y$. 
Finally, we use $\supp p_X$ to indicate the support of distribution $p_X$, and notation like $\langle f(X) \rangle_{p_X} = \sum_x p_X(x) f(x)$ to indicate expectations.

A \textit{partial function} $f: A\to B$ is a map from some subset of $A$, which is called the domain of definition of $f$, into $B$. We write $\dom f \subseteq A$ to indicate the domain of definition of $f$, and $\img f := \{ f(a) : a \in \dom f\}$ to indicate the image of $f$. The value of $f(a)$ is undefined for any $a\not\in\dom f$.

For any set $A$, we use $A^*$ to indicate the set of finite strings of elements from $A$. We use $A^\infty$ to indicate the set of infinite strings of elements from $A$. In particular, $\{0,1\}^*$ indicates the set of all finite binary strings. Note that for any finite $A$, $A^*$ is a countably infinite set.

The Kronecker delta is indicated by $\delta(\cdot, \cdot)$.  We sometimes write $\delta_x$ to indicate a delta-function probability distribution over outcome $x$ of random variable $X$, $\delta_x(x') = \delta(x,x')$. 

We use standard asymptotic notation, such as $f(x) = g(x) + O(1)$, which indicates that $|f(x) - g(x)| \le \kappa$ for some $\kappa \in \mathbb{R}$ and all $x$. Similarly, notation like  $f(x) \le g(x) + O(1)$ indicates that $f(x) - g(x) \le \kappa$ for some  $\kappa \in \mathbb{R}$ and all $x$.

\section{Background on Turing Machines and AIT}
\label{sec:Background-AIT}

\subsection{Turing Machines}
\label{sec:TMs}
In its canonical definition, a TM comprises three variables, and a rule for their joint %
dynamics. 
First, there is a \termdef{tape} variable whose state is a semi-infinite string 
$s \in A^\infty$, where $A$ is a finite set of tape symbols which includes a special \emph{blank} symbol. 
Second, there is a \termdef{pointer} variable  $v \in \{1,2,3,\dots\}$, which is interpreted as specifying %
a ``position'' on the tape (i.e.,  an
index into the infinite-dimensional vector $s$). Finally, there is a \termdef{head} variable $h$ whose state belongs to a finite set, 
which includes a specially designated \termdef{start state} and a
specially designated \termdef{halt state}. %

The TM starts with its head in the start state, the pointer set to position 1, and its tape containing some finite string of non-blank symbols, followed by blank symbols. 
The joint state of the tape, pointer, and head evolves over time according to a discrete-time \textit{update function}. 
If during that evolution the head ever enters its halt state, that is interpreted as %
the computation being completed. %
If and when the computation completes, we say that the TM has then \textit{computed} its output,
which is specified by the state of its tape at that time.  Importantly, for some inputs, a TM might never complete its computation, i.e., it may go into an infinite loop and never enter the halt state. 
The operation of a TM is illustrated in a schematic way in \cref{fig:tm}. A more formal definition of a TM and the update function is provided in \cref{app:TMs}.

\begin{figure}
\includegraphics[width=0.556\columnwidth]{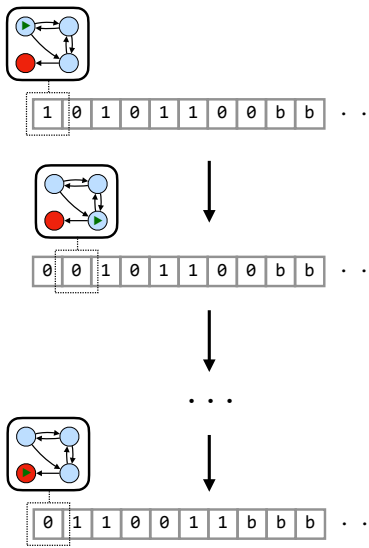}
\caption{A TM performing a computation. The update function is applied over a sequence of steps, causing the  finite-state head  (rounded box, states are colored circles) to move along an infinite tape of symbols ($\texttt{b}$ indicates a special ``blank'' symbol). During each step, the head can read/write the tape symbol in the current position, move left or right along the tape, and change its current state (green triangle). The computation completes if and when the head reaches its halt state (red circle).
\label{fig:tm}
}
\end{figure}

There many other variants of TMs that have been considered in the literature, including ones with multiple tapes and multiple heads. However, all of these variants are computationally equivalent: any computation that can be carried out with a particular TM variant can also be carried out with some TM that possesses a single tape and a single head~\cite{sipser2006introduction,papadimitriou2003computational,wolpert_thermo_comp_review_2019}.

For simplicity of analysis, we make two assumptions about the TMs analyzed in this paper, none of which affect the computational capabilities of the TMs. 
First, we assume that the tape alphabet $A$ contains the binary symbols $0$ and $1$, and that these are the only non-blank symbols present on the tape at the beginning of the computation. 
Second, we assume that any TM 
we consider is designed so that, if and when it reaches a halt state, %
its tape will contain a string from $\BB$ followed by all blank symbols, and the pointer will be set to $1$ (i.e., returned to the start of the tape).  This assumption of a ``standardized'' halt state properly accounts for the thermodynamic costs of running a complete cycle of the TM. For instance, after this standardized halt state is reached, the output of the TM can be moved from the tape onto an off-board storage device and a new input can be moved from another off-board storage device onto the tape, thus preparing the TM to run another program. Importantly, both of these operations can in principle be performed without incurring thermodynamic costs~\cite{wolpert_thermo_comp_review_2019}.

Given the above assumptions, one can represent  
the computation performed by any TM $\T$ as a partial function over the set of finite-length bit strings $\BB$ (see \cref{app:TMs}), which
%
we write as $\funcT : \BB \to \BB$. 
In this notation, $\funcT(x) = y$ indicates that when TM $\T$ is started with input program $x$, it eventually halts and produces the output string $y$.    
Note that  $\funcT$ is a {partial function} because it is undefined for any input $\inp$ for which $\T$ does not eventually halt~\cite{livi08,sipser2006introduction,grunwald2004shannon}. Thus,  $\dom \funcT$ (the domain of definition of $\funcT$) is the set of all input strings on which $\T$ eventually halts, which is sometimes called the ``halting set of $\T$'' in the literature. 

As mentioned in the introduction, a universal TM (UTM) is a TM that can simulate any other TM.
More precisely, given some UTM $\U$ 
and any other TM $\T$, there exists an ``interpreter program'' $\sigma_{\U,\T}$ 
such that for any input $\inp$ of $\T$, $\funcU(\sigma_{\U,\T}, \inp) = \funcT(\inp)$. 
%
%
%
%
%
%
%
%
%
Intuitively, this means that there exists programming languages which are ``universal'',
meaning they can run programs written in any programming language, after
appropriate translation from that other
language.  Note that since $\T$ can itself be a UTM, any UTM can simulate any other UTM. 

Given some partial function $f : \BB \to \BB$ and a TM $\T$, we sometimes say that $\T$ \emph{computes} $f$  if $\funcT=f$ (i.e., $\dom \funcT = \dom f$ and $\funcT(x)=f(x)$ for all $x\in\dom f$). 
We 
say that ``$f$ is computable'' if there exists some TM $\T$ that computes $f$. 
Importantly, there exist functions $\BB \to \BB$ which are \textit{uncomputable}, meaning they cannot be computed by any TM. 
The existence of non-computable functions follows immediately from the fact that there are an uncountable number of functions $\BB \to \BB$, but only a countable number of TMs. 
As an example of an uncomputable function, %
there is no TM which can take any input string $x$ and output a 0 or 1, corresponding to whether or not
$x$ is in the halting set of some given UTM $U$~\cite{livi08,sipser2006introduction,grunwald2004shannon}.

We say that the halting set $\dom \funcT$ is a \emph{prefix-free set} if for any input $x \in \dom \funcT$, there is no other input $x'\in \dom \funcT$ that is a proper prefix of $x$. 
In this paper we %
only consider TMs $\T$ such that  $\dom \funcT$ is prefix-free, which are sometimes called ``prefix TMs'' in the literature. 
Importantly, the set of all prefix TMs
is computationally equivalent to the set of all
TMs: any prefix TM can be simulated by some non-prefix TM and vice-versa. 
However, prefix TMs have many useful mathematical properties, and so
have become conventional in the AIT literature~\cite{livi08}.  See \cref{app:TMs} for a discussion of how prefix TMs can be constructed.
%

%

Above we discussed computable functions from binary strings to binary strings, $\BB \to \BB$.  
It is also possible treat a finite binary string as an encoding
of a pair of binary finite strings. More precisely, assume that along with any TM $\T$, there is a one-to-one \textit{pairing function}
 $\langle a,b \rangle$, which maps pairs of binary strings to single binary strings, and whose %
 image is a prefix-free set. By inverting the pairing function, 
one can uniquely interpret a single binary string %
as a pair of strings. This allows to interpret the domain and/or image of a partial function computed by a TM as a subset of $ \BB\times  \BB$, rather than a subset of $\BB$. 
We will write $\funcT(a,b)$ as shorthand for 
$\funcT(\langle a,b \rangle)$.

It is also possible to interpret a binary string as encoding an integer~\cite{livi08}, or (by inverting the pairing function) as encoding two integers that specify a rational number.  This allows us to formalize the computability of a function from binary strings to integers, $f : \BB \to \mathbb{Z}$, or from binary strings to rationals, $f : \BB \to \mathbb{Q}$.  For a real-valued function $f: \BB \to \mathbb{R}$, we say that $f$ is \termdef{computable} %
if there is a TM that can 
produce an approximation of $f(x)$ accurate to within any desired precision. Formally, $f$ is computable if there exists some TM $\T$ such that $|\funcT(x,n) - f(x)| \le 2^{-n}$ for all $x \in \dom f$ and $n \in \mathbb{N}$. 

\subsection{Algorithmic Information Theory}
\label{sec:AIT}

As mentioned in the introduction, 
the \termdef{Kolmogorov complexity} of any bit string $x \in \BB$ 
is the length of the shortest program %
which leads a given UTM $\U$ to produce $x$ as output. We write this formally as
\begin{align}
K_\U(x) := \min_{z : \funcU(z) = x} \ell(z) \,.
\label{eq:kolmogorov}
\end{align}
The Kolmogorov complexity is unbounded: for any UTM $U$ and 
any finite $\kappa$,  there exists a string $x$ such that $K_U(x) > \kappa$ (this follows from the fact that $\BB$ is an infinite set, while only a finite number of different outputs can be produced by programs of length $\kappa$ or less). 
%
%
%
Moreover, $K_\U$ is an {uncomputable} function. This
implies that if the physical Church-Turing thesis is true, then  no real-world physical 
system can take any desired string $x$ as input and produce the value of $K_\U(x)$ as output.  On the other hand, 
Kolmogorov complexity can be bounded from above,\footnote{For any given $x$, one can compute an improving upper bound on $K_\U(x)$ by running multiple copies of $\U$ in parallel
with different input programs, while keeping track of the length of the shortest program found so far that has halted and produced output $x$~\cite{livi08}. In the limit, this procedure will converge on $K_\U(x)$.} and it is possible to derive many formal results about its properties~\cite{livi08}. 

%

%
%
%
%
%
%
%
%
%
%
%
%
%
%
%
%
%
%
%
%
%
%
%
%
%
%
%
%
%
%
%
%
%
%
%
%
%
%
One can define the Kolmogorov complexity not just for strings but also for computable partial functions. %
Recall from the previous section that given any UTM $\U$ and  TM $\T$, %
there is a corresponding ``interpreter program'' $\sigma_{\U,\T}$, which can be used by $U$ to simulate $\T$
on any input $\inp$.   
The Kolmogorov complexity of a computable function $f$ is defined as the length of the shortest interpreter program for $U$ that simulates a TM that computes $f$:
\begin{align}
K_\U(f) := \min_{\T : \funcT = f} \ell(\sigma_{\U,\T}) .
\end{align}
Similarly, the Kolmogorov complexity of some computable function $f : \BB \to \mathbb{R}$ is given by the length of the shortest interpreter program that approximates $f$ to arbitrary precision. %
 $K_U(f)$ is undefined if $f$ is not computable. 

Above we defined Kolmogorov complexity relative to some particular choice of UTM $\U$. In fact, the choice of $\U$ %
is only relevant up to an additive constant. %
To be precise, for any two UTMs $U$ and $U'$, the ``invariance theorem''~\cite{livi08} states that
\begin{align}
K_{U'}(x) &= K_U(x) + O(1) .\label{eq:invariance}
\end{align}
%
%
Given this result along with the unboundedness of $K_U$,  
for any two UTMs $U$ and $U'$ and any desired $\epsilon > 0$,
\begin{align}
1-\epsilon < {K_U(x)}/{K_{U'}(x)} < 1+\epsilon
\label{eq:2}
\end{align}
for all but a finite number of strings $x$ (out of the infinite set of all possible such strings).  
For many purposes, this allows us to
dispense with specifying the precise UTM $U$ when referring to the Kolmogorov complexity of a string $x$,
and simply write $K(x)$ instead of $K_U(x)$. 

%
%
Finally, the \textit{conditional Kolmogorov complexity} of $x \in \BB$ given $y \in \BB$ is the length of the 
shortest program that, when paired with $y$ and then fed into a UTM $U$, produces 
$x$ as output: 
\begin{align}
K_\U(x|y) := \min_{z : \funcU( z,y ) = x} \ell(z) .
\label{eq:conditional-kolmogorov}
\end{align}
%
%
%
%
%
%
%
%
%
Like regular Kolmogorov complexity, the conditional Kolmogorov complexity is unbounded and uncomputable, though one can derive increasingly tight upper bounds on it.  
In addition, like regular Kolmogorov complexity, the conditional Kolmogorov complexity defined 
relative to two UTMs $U$ and $U'$ differs only up to an additive constant which does not depend on $x$ or $y$~\cite{livi08},
\begin{align}
K_{U'}(x|y) &= K_U(x|y) + O(1) .\label{eq:invariancecond}
\end{align}
Accordingly, for many purposes we can simply write $K(x|y)$, without specifying the precise UTM $\U$.

\section{Background on statistical physics}
\label{sec:physback}

\subsection{Physical setup}

We consider a physical system with a countable state space $\sX$. 
In practice, $\sX$ will often be a ``mesoscopic'' coarse-graining of some underlying phase space, in which case $\sX$ would represent 
the states of the system's ``{information bearing degrees of freedom}''~\cite{bennett2003notes}. 
For simplicity, in this paper we ignore issues raised by coarse-graining,
and treat $\sX$ as the microstates of our system. 

%

%
%
We assume that the system is connected to a work reservoir and a heat bath at temperature $T$. 
The system evolves dynamically under the influence of a driving protocol,
and we are interested in its dynamics over some fixed interval $t\in [0, t_f]$. 

%
%
 %
%
%
%
%
%
%
%
%
%
%
%
%
%

As mentioned in the introduction, 
research in nonequilibrium statistical physics has defined  
%
thermodynamic quantities such as heat, work, and entropy production at the level of individual trajectories of a stochastically-evolving process, so that ensemble averages 
of those measures over
all trajectories obey the usual properties required by conventional statistical physics~\cite{seifert2012stochastic,van2015ensemble}. Adopting this approach, 
we define the \termdef{heat function} $\Q(x)$ as the expected amount of %
heat transferred from our system to the heat bath during the interval $t\in [0, \ft]$, assuming that the system begins in initial state $x$. Following a standard setup in the literature~\cite{Jarzynski2000,esposito2010entropy,sagawa2012thermodynamics,gemmer_quantum_2004}, we assume that the joint Hamiltonian of the system and bath can be written as 
\begin{align}
H^t_X(x) + H_B(b) + H_\mathrm{int}(x,b)
\end{align}
where $H^t_X$ is the time-dependent Hamiltonian of the system, $H_B$ is the bare Hamiltonian of the bath, and $H_\mathrm{int}$ is the interaction Hamiltonian (which is typically very small, reflecting weak-coupling). Regardless of the initial state of the system $x$, the bath is initially taken to be in a Boltzmann distribution $p_B(b)\propto e^{-\heatToEntropy{H_B(b)}}$. 
Let $p_{B\vert x}'$ indicate the final distribution of the bath at $t=\ft$, given that the system began in initial state $x$. The heat function is then given by the  increase of the expected energy of the bath~\cite{Jarzynski2000,esposito2010entropy},
\begin{align}
\label{eq:Qexpr0}
\Q(x) = \langle H_B \rangle_{p_{B\vert x}'} - \langle H_B \rangle_{p_B} \,.
\end{align}
The expectation of $Q(x)$ under any initial distribution $p_X$ then gives the overall expected amount of generated heat averaged across all trajectories, assuming that initial system-bath states are sampled from $p_X(x) p_B(b)$.  This setup can be used to model infinite-sized idealized heat baths (infinite heat capacity, fast equilibration, etc.)  by taking appropriate limits~\cite{Jarzynski2000,esposito2010entropy,sagawa2012thermodynamics,gemmer_quantum_2004}.

%
%

%
%
%

%
%
%

A central quantity of interest in statistical physics is the
(irreversible) \termdef{entropy production} (EP), which reflects the overall increase of entropy in the system and the coupled environment. For a given physical process, let $p_X$ be an initial state distribution at time $t=0$ and let $p_Y$ be the corresponding final state distribution at $t=\ft$.  Then,  
the expected EP is
%
%
%
%
%
%
%
%
%
%
%
%
%
%
\begin{align}
\EP(\pinit) &=  S(\pfin) - S(\pinit) + \heatToEntropy{\langle Q\rangle_{\pinit}} ,
\label{eq:ep1_alternative}
\end{align}
where $S(\cdot)$ indicates the Shannon entropy.\footnote{
\label{foot:ep}
For countably infinite state spaces (e.g., the state spaces of UTMs), the Shannon entropy of both the initial and final distribution can be infinite, making the expression in  \cref{eq:ep1_alternative} ill-defined. In such cases, a finite EP can often be defined by writing \cref{eq:ep1_alternative} as a limit $\EP(\pinit)=\lim_{i\to \infty} \EP(p_i)$, where each $p_i$ has finite support and $\lim_{i\to \infty} p_i= \pinit$.} 
By the second law of thermodynamics, $\EP(\pinit)$ is non-negative for any physically-allowed heat function $Q$ and every initial distribution $\pinit$~\cite{esposito2010entropy}. A physical process is said to be \emph{thermodynamically reversible} if it achieves zero EP.

We say that a physical process is a \emph{realization} of some partial function $f:\sX \to \sX$ if
the conditional probability of the system's ending state given the starting state obeys
\begin{align}
p_{Y|X}(y|x) = \delta(f(x),y) \quad \forall x \in \dom f.
\label{eq:realizes}
\end{align}
The behavior of a realization of $f$ on 
initial states $x \not\in \dom f$ can be arbitrary, as it is not constrained by \cref{eq:realizes}.


%

%
%
%

%
%
%

%
%
%

The following technical result  
 links the logical properties of a partial function $f$ with the heat function of any  realization of that $f$. This result will be central to our analysis, as it will allows us to 
establish thermodynamic constraints on processes that realize TMs.

\begin{restatable}{prop}{propsecond}
\label{prop:second}
Given a countable set $\sX$, let $f : \sX \to \sX$ and $G: \sX \to \mathbb{R}$ be two  partial functions with the same domain of definition. 
The following are equivalent:
\begin{enumerate}
	\item  For all $\pinit$ with $\supp \pinit \subseteq \dom f$,
\begin{align}
\langle \G \rangle_{\pinit} + S(p_{f(X)}) - S(\pinit) \ge 0.
\label{eq:prop1ep}
\end{align}
	\item For all $y \in \img f$,
	\begin{align}
	\sum_{x: f(x)=y} e^{- \G(x)} \le 1.
	\label{eq:prop1ineq}
	\end{align}
	\item There exists a realization of $f$ coupled to a heat bath at temperature $T$, whose heat function $\Q$ obeys
	\begin{align}
	\Q(x)/kT = \G(x) \qquad\forall\;x \in \dom f.
	\label{eq:prop1q}
	\end{align}
\end{enumerate}
\end{restatable}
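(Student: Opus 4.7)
The plan is to prove the cyclic implications $(3) \Rightarrow (1) \Rightarrow (2) \Rightarrow (3)$.

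The implication $(3) \Rightarrow (1)$ is immediate. If a realization of $f$ has heat function satisfying $\Q(x)/kT = \G(x)$ on $\dom f$, then for any $\pinit$ with $\supp \pinit \subseteq \dom f$ the final distribution of the system equals $p_{f(X)}$, and substituting $\langle \Q \rangle_{\pinit}/kT = \langle \G \rangle_{\pinit}$ into the EP formula \cref{eq:ep1_alternative}, together with the second law $\EP(\pinit) \geq 0$, gives exactly \cref{eq:prop1ep}.

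For $(1) \Rightarrow (2)$, I would fix $y \in \img f$, pick any finite $S \subseteq f^{-1}(y)$, and test \cref{eq:prop1ep} on the Gibbs-like distribution $\pinit(x) = e^{-\G(x)}/Z_S$ supported on $S$, where $Z_S := \sum_{x \in S} e^{-\G(x)}$. A direct calculation gives $S(\pinit) = \langle \G \rangle_{\pinit} + \ln Z_S$, while $f(\supp \pinit) = \{y\}$ forces $S(p_{f(X)}) = 0$. Hence \cref{eq:prop1ep} collapses to $-\ln Z_S \geq 0$, i.e., $Z_S \leq 1$. Letting $S$ exhaust $f^{-1}(y)$ by monotone convergence yields \cref{eq:prop1ineq}.

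The substantive direction is $(2) \Rightarrow (3)$, where an explicit physical realization must be exhibited. My plan is to first specify the realization at the level of trajectories and then argue it is implementable by a Hamiltonian driving protocol. Concretely, I would take the forward process to be deterministic, $x \mapsto f(x)$, releasing heat $kT \G(x)$ along the trajectory. Local detailed balance then pins the conjugate time-reversed transition probabilities at $e^{-\G(x)}$ for $f(x) \mapsto x$, and the Kraft-type bound \cref{eq:prop1ineq} is precisely the statement that these reverse weights, summed over $x \in f^{-1}(y)$, never exceed one, so they form a bona fide (possibly subnormalized) backward kernel. Any defect $1 - \sum_{x \in f^{-1}(y)} e^{-\G(x)}$ can be absorbed by an auxiliary reservoir state that is never visited in the forward direction.

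The main obstacle will be converting this trajectory-level specification into an honest time-dependent Hamiltonian on the countable state space $\sX$. I would address it by truncating to finite subsets of $f^{-1}(y)$, on each of which an explicit quench-and-relax protocol (of the kind used in the Landauer-erasure and Szilard-engine literatures) realizes the prescribed transitions with exactly the required trajectory heat; a limiting argument then extends the construction to all of $f^{-1}(y)$, and pasting the subprotocols over $y \in \img f$ produces the full realization. The standard fluctuation-theorem bookkeeping verifies that $\Q(x) = kT\G(x)$ along every forward trajectory, closing the cycle.
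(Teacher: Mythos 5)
Your proposal is correct and follows essentially the same route as the paper: the second law gives $(3)\Rightarrow(1)$; the Gibbs-like distributions $e^{-G(x)}/Z_S$ on finite subsets of each fiber $f^{-1}(y)$ give $(1)\Rightarrow(2)$ exactly as in the paper's proof; and your $(2)\Rightarrow(3)$ plan (an auxiliary state per output absorbing the defect $1-\sum_{x\in f^{-1}(y)}e^{-G(x)}$, a quench-and-relax restricted to each fiber, then a reversible merge) is the same construction the paper carries out explicitly via the time-dependent Hamiltonian $H(x)=f(x)+kT\,G(x)$ with auxiliary levels $H(x_y)$. The only substantive difference is that you leave the Hamiltonian-level details of $(2)\Rightarrow(3)$ as a sketch framed in fluctuation-theorem language, whereas the paper writes out the protocol and verifies $Q(x)/kT=G(x)$ by computing the entropy production of the free-relaxation step for delta-function initial conditions.
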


This proposition is proved in \cref{app:EP-prop}. The proof exploits a useful decomposition of EP into a sum of a conditional Kullback-Leibler divergence term and a non-negative expectation term, which is derived in \cref{app:EP}.


We note two things about \cref{prop:second}. 

First, the remainder of the inequality in \cref{eq:prop1ineq} determines the EP incurred by a realization of $f$. In particular, as we show in \cref{app:EP-prop}, if that inequality is tight for all $y \in \img f$, then the inequality in \cref{eq:prop1ep} is also  tight for some initial  distributions $p_X$. In this case, the realization of $f$, as referenced in \cref{eq:prop1q}, is thermodynamically reversible for those initial $p_X$.   

Second, it is straightforward to generalize the setup described in this section
to consider a 
system connected to multiple thermodynamic reservoirs, instead of a single heat bath~\cite{van2013stochastic}. In the general case, \cref{prop:second} still holds, if the left hand side of \cref{eq:prop1q} is interpreted as 
the amount of entropy increase in all coupled thermodynamic reservoirs, given that the process begins in initial state $x$. 
\cref{eq:prop1q} is a special case of this general formulation, since releasing $Q(x)$ of heat to a bath at temperature $T$ increases the bath's entropy by $Q(x)/kT$. 

\subsection{Realizations of a TM}

\begin{figure}
\includegraphics[width=1\columnwidth]{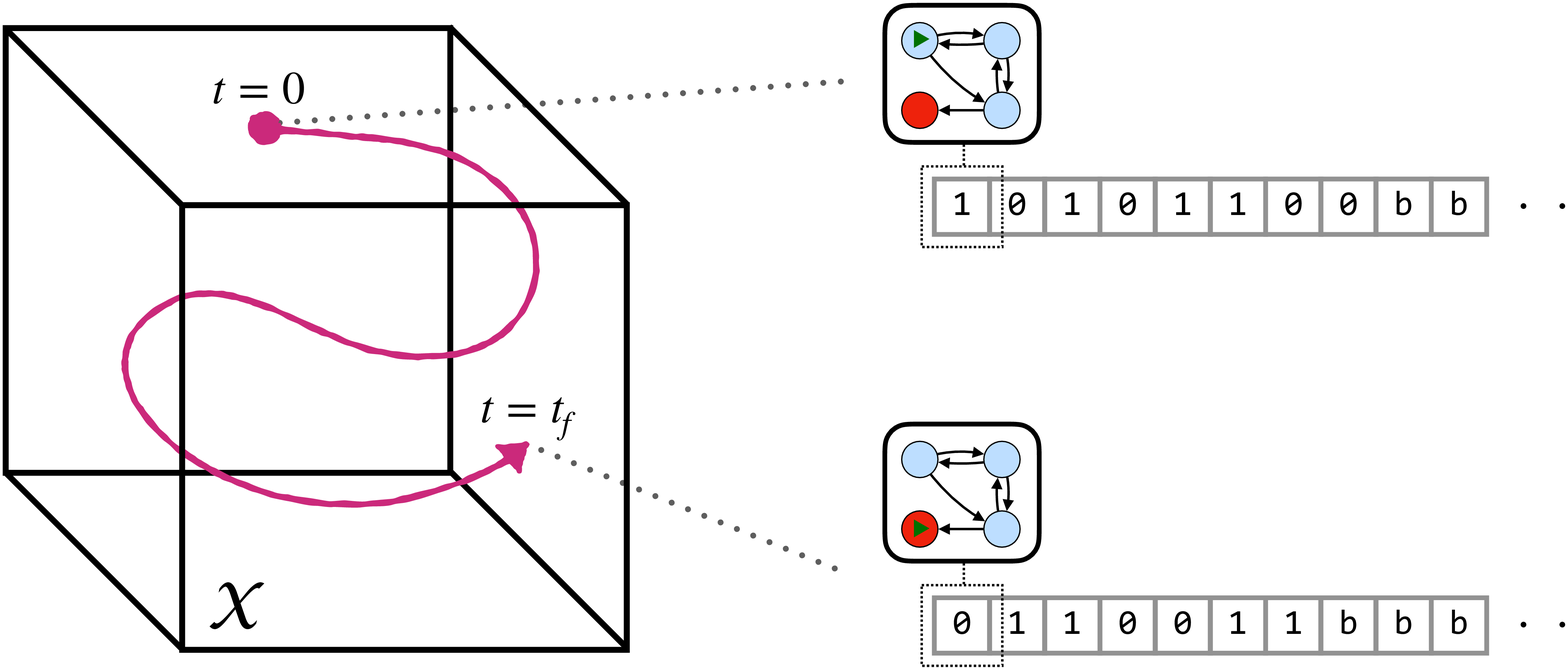}
\caption{A realization of a TM is a physical process over a countable state space $\sX \subseteq \BB$, which maps initial states to final states according to the input-output function of the TM. As a hypothetical example, consider a process that evolves to the final state $\texttt{0110011}$ at $t=\ft$ when started on initial state $\texttt{10101100}$ at $t=0$ (left), as might correspond to a computation performed by the TM (right, see also \cref{fig:tm}).\label{fig:tmphysic}
}
\end{figure}

We briefly describe how a physical process can realize a TM $\T$. 
Without loss of generality, we assume that the countable state space of the physical system $\sX$ can be represented by a set of binary strings, so $\sX \subseteq \BB$.

As described in \cref{sec:TMs} and \cref{app:TMs},  the computation performed by a TM can be formalized as a partial function $\funcT:\BB \to \BB$.  We say that a physical process is a realization of a TM $\T$ if it realizes the partial function $\funcT$, in the sense of \cref{eq:realizes} and \cref{prop:second}. 
Note that this is only possible when $\dom \funcT \cup \img \funcT \subseteq \sX$.   
Note also that there may be physical states $x \in \sX$ that do not belong to $\dom \funcT$.  When the system is initialized with such states at $t=0$, its will undergo some well-defined dynamical evolution.  However, its behavior for such initial states is not constrained by the fact that the system is a realization of the TM, and can be arbitrary (in general, 
the dynamic and thermodynamic properties for such initial $x$ are not our focus). 
The mapping between a TM and a physical system is illustrated in \cref{fig:tmphysic}.




Many TMs, including all UTMs, can have arbitrarily long programs (i.e., unbounded input length), and can take an arbitrary number of steps before halting on any particular input (i.e., unbounded runtime). For such TMs, our formulation appears to assume a physical system that can store a tape of unbounded size, and which can complete an unbounded number of computational steps in a finite time interval $[0,t_f]$, which is not realistic from a physical point of view. In such cases, one can imagine a sequence of realizations, each of which involves manipulating a finite (but growing) tape over a finite (but growing) number of computational steps. Our analysis and results then apply to limit of this sequence, in which the tape size and runtime can be arbitrarily large.


In the following sections, we apply 
\cref{prop:second} with $f = \funcT$ to establish constraints on the heat function $\Q$ of any realization of $\T$. 
We emphasize that in general these constraints do not {fully} determine the heat function of 
any realization of $\T$: there can be many  different realizations of any given TM $\T$, each with different heat
functions and therefore with different thermodynamic properties (see also~\cite{wolpert_thermo_comp_review_2019}).   
In the next sections, 
we analyze the thermodynamics of two particular realizations of a given TM, 
which we call the \emph{coin-flipping realization} and the \emph{dominating realization}.
We work ``backwards'' for each one, first specifying its heat function, then using \cref{prop:nootherconst}
to establish that there is in fact a realization with that heat function, and then analyzing the properties of that heat
function.

Before proceeding, we discuss an important issue concerning the computability properties of realizations of TMs. We say that a realization of a TM $\T$ with heat function $\Q$ is a \emph{computable realization} if the function $\Q(x)/kT$ is computable (i.e., if there exists a TM that can take as input any $x \in \dom \funcT$ and output the value of $\Q(x)/kT$ to arbitrary precision). Some of our results below will rely on particular properties of computable realizations. At the same time, some of the realizations we construct and analyze below will not be computable. Whether such non-computable realizations can \emph{actually} be constructed in the real-world depends on the status of the physical Church-Turing thesis. To see why, imagine that one could construct a non-computable realization of a TM; for example, it might have $\Q(x)/kT = K(x)$, where $K(x)$ is the (non-computable) Kolmogorov complexity function. In that case, one could run the realization on various inputs $x$, use a calorimeter to measure the generated heat in units of $kT$ (i.e., measure $\Q(x)/kT$), and then arrive at the value of $K(x)$. The above procedure would use a physical process to evaluate a non-computable function, thereby violating the physical Church-Turing thesis.

In this paper, we do not take a position on the validity of the physical Church-Turing thesis. Rather, we will explicitly discuss relevant (non-)computability properties of our realizations, as well as how our non-computable realization can be interpreted in light of the physical Church-Turing thesis. It is important to emphasize, however, that even our non-computable realizations are consistent with the laws of thermodynamics, and are well-defined in terms of a sequence of time-varying Hamiltonians and stochastic dynamics (see the construction in the proof of \cref{prop:second}, \cref{app:EP-prop}). Their non-computability arises from the fact that our construction uses various idealizations, such as the ability to apply arbitrary Hamiltonians to the system, which are standard in theoretical statistical physics but which disregard possible \emph{computational constraints} on the set of achievable processes. For example, our construction disregards the fact that, if the physical Church-Turing thesis holds, then it should be impossible to apply non-computable Hamiltonians to the system, such as $H(x) = K(x)$.

\section{coin-flipping realization}
\label{sec:EF_coin_flipping}

We first consider a realization of a  TM $\T$ that achieves zero EP (i.e., is thermodynamically reversible) %
when run on input programs randomly sampled from a particular input distribution. %

To begin, consider the following \textit{coin-flipping} distribution over programs,
which plays an important role in AIT: %
\begin{align}
\label{eq:coin0}
m_X(\inp) := 
\begin{cases}
2^{-\ell(\inp)} & \text{if $x \in \dom \funcT$}\\
0 & \text{otherwise}\,.
\end{cases}
\end{align}
Note that $m_X$ sums to a value less than $1$~\cite{livi08}, therefore $m_X$ is a non-normalized probability distribution. Nonetheless, we refer to it as a ``distribution'', following the convention in the AIT literature.   
%

To understand $m_X$ more concretely, imagine 
that the initial state of the TM's tape is set to a sample of an infinitely long sequence of independent and uniformly distributed bits.  
Then, $m_X(\inp)$ is proportional to the probability that 
$\T$ eventually halts after reading the bit string $\inp$ from the tape.\footnote{For clarity, we omit various technicalities regarding the random process that motivates the coin-flipping distribution. To be precise, this process should be defined in terms of a multi-tape machine, in which one of the tapes is a one-way read-only ``input tape'' (see \cref{app:TMs}). Then, $m_X(\inp)$ is the probability that the multi-tape machine halts after reading the string $x$ from the input tape, assuming the input tape is initialized with an infinitely-long random bit string.} %
Under this hypothetical initialization procedure, the TM will halt on output $y$ with probability 
%
%
%
%
%
%
%
%
%
%
%
%
%
\begin{align}
\uY(y) = \sum_{{x: \funcT(x)=y}} 2^{-\ell(x)}.
\label{eq:u_def}
\end{align}
This output distribution is biased toward strings that can be generated by short  input programs. 
Note that, like $m_X$, this output distribution is not normalized. 
%
%
%
%
%
%
%
%
%
%
%
%
%

We now consider the thermodynamic cost of running a TM on the coin-flipping distribution. We first define a normalized version of the coin-flipping distribution,
\begin{align}
\pcoin(x) := m_X(x) / \Omega_\T,
\label{eq:pcoinX}
\end{align}
where $\Omega_\T := \sum_{x\in\dom \funcT} 2^{-\ell(x)} \le 1$ is a normalization constant (which in AIT is called the ``halting probability''). 
 $\pcoin(x)$ is the probability that a TM halts after running input program $x$, conditioned on the TM halting on \emph{some} input program, given the random initial tape described above.
We also 
define a normalized version of the output distribution,
\begin{align}
\pfincoin(\funcT(x)) := m_Y(\funcT(x)) / \Omega_\T .
\label{eq:pcoinY}
\end{align}

Now consider the associated function
\begin{align}
\G(x) & =- \eplog \pcoin(x) + \eplog \pfincoin(\funcT(x)).
\label{eq:19}
\end{align}
It can be verified that this function satisfies condition 2 of \cref{prop:second}.  Thus, there is at least one realization of $\T$, which we call the 
 \textit{coin-flipping realization}, whose heat function obeys
\begin{align}
\Qcoin(x) = \entropyToHeatParen{- \eplog \pcoin(x) + \eplog \pfincoin(\funcT(x))}.
\label{eq:19q}
\end{align}
By plugging $\Qcoin$ into \cref{eq:ep1_alternative}, we can verify that this realization achieves $\EP(\pcoin) = 0$, meaning that it is thermodynamically reversible  when run on input distribution $\pcoin$.

\cref{eq:19q} can be further simplified by using the definitions of $\pcoin$ and $ \pY^\mathrm{coin}$:
\begin{equation}
\Qcoin(x) 
 = \bitsToHeatParen{ \ell(x) + \log_2 \uY(\funcT(x))} \label{eq:coinres0} \,.
\end{equation}
This establishes the claim in the introduction, that the heat generated under the coin-flipping realization
on input $x$ is proportional to the length of $x$, minus a ``correction term'' $-\log_2 \uY(\funcT(x))$. This correction term is always positive, since $\uY(y)\le 1$ for all $y$. Moreover, it reflects the logical irreversibility of the partial function $\funcT$ on input $x$: it achieves its minimal value of $-\log_2 \Omega$ when $\funcT$ maps all inputs to a single output, and its maximal value of $\ell(x)$ when $\funcT$ is logically reversible on input $x$ (i.e., when $x$ is the only input that produces output $\funcT(x)$).  In the latter logically reversible case, $\Qcoin(x)=0$ for all $x$.

\cref{eq:coinres0} implies that if one wishes to produce some desired output $y \in \img \funcT$ while minimizing heat generation, one should choose the shortest input $x$ such that $\funcT(x)=y$. Loosely speaking, the ``less efficient'' one is in choosing what program to use to compute $y$, the greater the heat
that is expended in that computation.  
Note that this relationship between shorter programs and less heat generation is not a universal feature of all realizations of TMs. It holds for the coin-flipping realization because this realization is explicitly designed to be  thermodynamically-reversible for the coin-flipping input distribution, which has
a ``built-in bias'' for shorter input strings.

An important special case is when the TM of interest is a universal TM.  For any UTM $\U$,  the output distribution in \cref{eq:u_def} is called the \termdef{universal distribution} in AIT. The universal distribution 
 possesses many important mathematical properties and is one of the 
%
cornerstones of AIT~\cite{livi08,chaitin2004algorithmic,chai66,hutter2008algorithmic,hutter2003existence}, and has attracted attention in artificial intelligence~\cite{hutter2004universal,rathmanner2011philosophical,solo64,rissanen1983universal,hutter2003existence,schmidhuber2007new}, foundations of physics~\cite{schmidhuber2000algorithmic,mueller2017law}, 
and statistical physics~\cite{tadaki_generalization_2002,calude_natural_2006,tadaki_statistical_2010,baez2012algorithmic}. 
%
In particular, 
``Levin's Coding Theorem''~\cite{livi08} relates the universal distribution to Kolmogorov complexity,
\begin{align}
- \log_2 \uY(y) = K(y) + O(1) \,.
\label{eq:coding-theorem}
\end{align}
This implies that for a UTM, the ``correction term'' mentioned above is equal to the Kolmogorov complexity of the output, up to an additive constant. 
%

Plugging \cref{eq:coding-theorem} into \cref{eq:coinres0} lets us write the heat function of the coin-flipping realization of a UTM as
%
\begin{equation}
\Qcoin(x) =  \bitsToHeatParen{ \ell(x) - K(\funcU(x)) } + O(1) \,.
\label{eq:coinwork}
\end{equation}
So for a coin-flipping realization of a UTM,
the heat generated on input $x$ reflects how  much the length of $x$ exceeds the shortest program which produces the same output as $x$.





%
%
%
%

These results 
allow us to calculate the thermodynamic complexity of any output string $y$ using the coin-flipping realization of a UTM $\U$, i.e., the minimal heat necessary to generate some desired output $y$:
\begin{align}
\min_{{x : \funcU(x) = y}} \Qcoin(x) & = O(1)\label{eq:coinres2} ,
\end{align} 
where we've used \cref{eq:coinwork} and the fact that $\min_{x:\funcU(x)=y} \ell(x) = K(y)$ by definition. 
Thus, for the coin-flipping realization, the minimal heat 
required by the UTM to compute $\out$ is 
bounded by a constant. %
As emphasized above,
this is a fundamental difference 
between thermodynamic complexity of the coin-flipping realization and 
Kolmogorov complexity, which is unbounded as one varies over $\out$.

%
%
%
However, in order to actually produce a desired output $y$ on a UTM $U$ %
while generating the minimal possible amount of heat, 
one needs to know the shortest program for that $y$. 
Unfortunately, the shortest program for a given output is not computable in general.
In fact, we prove in \cref{app:nootherconst} that there cannot exist a computable function that maps any desired output $y$ to some corresponding input $x$ such that both $\funcU(x)=y$ and the heat is bounded by a constant, $\Qcoin(x)=O(1)$.

We finish by considering the {expected}  
 heat that would be generated by a realization of a UTM $U$ 
if inputs were drawn from the distribution $\pcoin$. 
To begin, rewrite \cref{eq:ep1_alternative}
as
\begin{align}
\langle Q \rangle_{\pcoin} =  \entropyToHeatParen{ S(\pcoin) - S(\pY^\mathrm{coin}) + \EP(\pcoin) } 
\label{eq:new_30}
\end{align}
In \cref{app:infiniteheat}, we show that the difference of entropies on the RHS
of \cref{eq:new_30}  is  infinite.  Since $\EP(\pcoin)$ is always non-negative, any realization of $\U$ %
must, on average, expend an infinite amount of heat to run input programs sampled from $\pcoin$.  This applies to the coin-flipping distribution, for which $\EP(\pcoin)=0$, as well as any other realization.  Note that $\ell(x) \ge \heatToBits{\Qcoin(x)}$ (by \cref{eq:coinres0} and the fact that $m_Y(y) \le 1$ for all $y$), and that 
$\ell(x)$ is a lower bound on the number of steps that a prefix UTM needs to run program $x$ (since it must take at least one step per read-in bit).  Thus, the fact that programs sampled from the coin-flipping distribution have infinite expected heat generation also implies that they have an infinite expected length, and take an infinite expected number of steps before halting.

%
%
%
%
%

%
%
%
%
%
%

We finish by emphasizing that 
EP and expected heat vary in different ways as one changes the initial distribution. 
For example, if we run the coin-flipping realization on input distribution $\pcoin$, then EP is zero while expected heat is infinite. On the other hand, since expected heat is a linear function of the input distribution, minimal expected heat corresponds to a  delta-function input distribution centered on the $x$ that minimizes $\Qcoin(x)$.  However, some simple algebra shows that any such delta-function distribution incurs a strictly positive EP for any UTM.\footnote{Given a UTM and any string $y$, there are many inputs $x$ that result in $\funcU(x)=y$. This means that $\pfincoin(\funcU(x))>\pcoin(x)$ for any $x$, so $\Qcoin(x)>0$ by \cref{eq:19q}.  Thus, for any delta-function distribution $\delta_x$, 
$\EP(\delta_{x})= S(\delta_{\funcU(x)}) - S(\delta_{x}) + Q({x}) = Q({x})>0$, where we've used $S(\delta_{\funcU(x)}) = S(\delta_{x})=0$.}  
Thus, the distribution that minimizes expected heat cannot be the one that minimizes EP.

\section{Dominating realization}
\label{sec:EF_optimal}


%
%
%
%

\subsection{Minimal possible heat function}

We now consider a realization  of a TM 
whose heat function 
is  smaller, up to an additive constant, than the function of any computable realization.

%
%
%
%
%
%
%
%
%
%
%
%
%
%
%
%
%
%
To begin, given any (universal or non-universal) TM $\T$, consider the associated function $G(x)= \bitsToEntropy{ K(x \vert \funcT(x)) }$. 
Note that this conditional Kolmogorov complexity 
can be defined in terms of any desired UTM, with no \textit{a priori} relation to $\T$. %
In \cref{app:dom}, we show that this function $G$ satisfies condition 2 in \cref{prop:second}. Therefore, there must be at least one realization of $\T$,
which we call the \textit{dominating realization}, whose heat function obeys 
\begin{align}
\Qopt(x) =  \bitsToHeat{ K(x \vert \funcT(x)) }.
\label{eq:qoptdef}
\end{align}

%
%
%
%
%
%
%
%
%
%
%

%
Intuitively speaking, the inputs $x$ that generate a large amount of heat 
under the dominating realization of a TM $\T$
are long and incompressible, %
even when given knowledge of their associated outputs $\funcT(x)$.  
An example of such an input 
is a program $x$ that instructs $\T$ to read through a long and 
incompressible bit string and then output nothing, so that $\funcT(x)$ is an empty string (this example is analyzed in more depth below, in \cref{sec:tradeoffex}). 
In contrast, the inputs $x$ that generate little heat under the dominating
realization are those in which  the output provides a large amount of information about the
associated input program.  For instance, if $\T$ is universal, then 
a program $x$ that consists of the instruction $\printycmd$  (represented in some appropriate binary encoding) generates little heat, since $K(\printycmd\vert y) = O(1)$ for any $y$. 
More generally, if $\funcT$ is logically reversible over its domain, then $K(x\vert \funcT(x))=O(1)$ for \textit{all} $x$ in that domain, because one can always reconstruct the input $x$ from the output $\funcT(x)$ by applying $\funcT^{-1}$. 
Thus, in the (logically reversible) case, the heat generated by the dominating realization on any input $x$ is bounded by a constant that doesn't depend on $x$.
%
%

Now consider any alternative computable realization of $\T$ that is coupled to a heat bath at temperature $T$, whose heat function we indicate by $\Q$. 
The assumption of computability means that the 
function $\heatToEntropy{\Q(x)}$ is computable (i.e., there is some TM that, for any desired $x$, can approximate the value of $\Q(x)$ in units of $kT$ to arbitrary precision). 
As we prove in \cref{app:dom}, the heat function of this alternative realization must  obey the following inequality,
\begin{align}
\label{eq:qoptineq}
\Q(x) \ge \Qopt(x) - \bitsToHeat {  K(\heatToEntropy{\Q}) + K(\funcT) } + O(1) ,
\end{align}
where $K(\heatToEntropy{\Q})$ is the Kolmogorov complexity of   
the heat function $\Q$ in units of $kT$, $K(\funcT)$ is the Kolmogorov complexity of the partial function computed by $\T$, and 
 $O(1)$ represents equality up to an additive constant (that does not depend on $x$, $\Q$, or $\T$).   
%
%
%
%
%
%
%
%
%
%
%
%
%
%
%
%
%
%
%

Since neither $K(\heatToEntropy{\Q})$  nor $K(\funcT)$ depends on the input $x$, \cref{eq:qoptineq} implies
$\Q(x) \ge \Qopt(x) + \kconst$ for some constant $\kconst$ that is independent of $x$.  Note though that $\kconst$ 
can depend on $\funcT$ (the partial function being computed) and the alternative realization $Q$, and note also that in principle  this constant may be arbitrarily large and negative. 
%
%
This means that for any {fixed} input $x$, there
may be computable realizations that result in far less heat when run on $x$ than does the dominating realization.  However, this can only occur if $\funcT$ has high complexity (large value of $K(\funcT)$), or if the heat function has high complexity, as reflected by a large value of $K(\heatToEntropy{\Q})$.  This shows that any computable realization  must face a fundamental tradeoff between three different factors: the ``lost'' algorithmic information about the input in the output, the complexity of the input-output map being realized, and the complexity of the heat function.  We explore this tradeoff using an example of erasing a long string in \cref{sec:tradeoffex}.


When the TM under question is universal, then it is guaranteed that there exists some program that can generate any desired output $y$.  This permits us to analyze the thermodynamic complexity of the dominating realization. 
It turns out that, as for the coin-flipping realization, this amount is bounded by a constant:
\begin{align}
\min_{x : \funcU(x) = y} \Qopt(x)  = O(1) \,.
\label{eq:kc}
\end{align}
This minimum is achieved by programs of the form $x = \printycmd$, since these programs achieve $K(x\vert \funcU(x))=O(1)$.  
\cref{eq:kc} also holds if the TM is not a UTM, as long as 
for each each output $y$, there is some $x$  that obeys  $\funcT(x)=y$ and $K(x|\funcT(x)) = O(1)$ (e.g., if $\funcT$ is logically reversible).

Finally, we consider the {expected}
heat that would be generated by running the
dominating realization of a UTM $\U$, assuming that inputs are sampled randomly from some input distribution. 
To parallel the analysis of the coin-flipping realization,
we consider the input distribution which results in minimal EP for 
the dominating realization, which we call $p_X^*$.  In \cref{app:infiniteheat}, we  prove that
the expected heat generated by the dominating realization on the input distribution $p_X^*$ is infinite. 
It is interesting to note that $\ell(x) \ge \Qopt(x)/(kT\ln 2)+ O(1)$ and, as we mentioned above, $\ell(x)$ is a lower bound on the number of steps that a UTM needs to run program $x$.\footnote{We have the inequalities $K(x\vert y)\le K(x) +O(1) \le \ell(x)+O(1)$. The first comes from subadditivity of Kolmogorov complexity~\cite{livi08}, while the second comes from \cref{lem:kbound} in \cref{app:domcoin}.} Thus, the fact that programs sampled from $p_X^*$ have infinite expected heat generation also implies that they have an infinite expected length, and an infinite expected runtime.  
Note that the dominating realization of a UTM will in general incur a strictly positive amount of EP, even when  run on the optimal input distribution $p_X^*$ (see \cref{app:posEP} for details).

%
%
%
%
%
%
%
%
%
%
%
%
%
%
%
%
%
%
%
%
%
%
%
%
%
%
%
%
%
%
%
%

%

%

%

%
%
%
%
%
%
%
%
%
%
%

%
%
%
%

%
%
%
%
%
%
%
%
%
%
%

%
%
%
%


%
%
%

%
%
%
%
%
%
%
%
%
%
%
%

%

%

%
%
%
%
%
%
%
%
%
%

%

%
%
%
%
%
%

%

%
%
%
%
%
%

%
%
%
%
%
%
%
%
%
%
%

\subsection{Practical implications of the dominating realization}
\label{sec:pctt}

Our analysis of the dominating realization uses several abstract computer science concepts, such as the computability of the heat function and its Kolmogorov complexity. 
It is worth making some comments about the real-world
significance of such concepts for the thermodynamics of physical systems.
%
%

%

First, the computability properties of the heat function are entirely separate from the computability properties of the logical map $\funcT$ realized by a physical process. In particular, the heat function can be uncomputable even though $\funcT$ is computable by definition (since $\funcT$ is the partial function implemented by a TM).
On the other hand, 
common interpretations of the physical Church-Turing thesis imply that the heat function of any \textit{actually} constructable real-world physical process
must be computable. This implies that, if the physical Church-Turing holds, the dominating realization generates less heat, up to an additive constant, than any  realization that can actually be constructed in the real-world. 
%
 
%

%
%
%
%
%
%
%
%

%

At the same time, while the dominating realization is better than any computable realization, it is important to note that it itself is not computable. 
This is because the conditional Kolmogorov complexity is not a computable function, i.e., there is no TM that can take as input two strings $x$ and $y$ and output the value of $K(x\vert y)$.  However, this does not necessarily imply that the dominating realization is irrelevant from a practical point of view.  This is because $K(x\vert y)$ is an \emph{upper-semicomputable} function, meaning that 
it is possible to compute an improving sequence of upper bounds that converges on $K(x\vert y)$. Formally, there is a computable function $f$ such that $f(x,y,n)\ge f(x,y,n+1)$ and $\lim_{n\to \infty} f(x,y,n) = K(x\vert y)$.\footnote{This function can be computed by a TM that runs multiple programs in parallel, while keeping track of the shortest program which has halted on input $y$ with output $x$.} 

The upper-semicomputability of $\Qopt$ allows one to approach the performance of $\Qopt$ by constructing a sequence $i=1,2,\dots$ of  realizations of $\funcT$, each with a computable heat function $Q_i$, such that $Q_i$ converge from above on $\Qopt$. 
Each subsequent realization in this sequence is guaranteed to be better (generate less heat) on every input than the previous. 
Moreover, because the heat functions converge on $\Qopt$, by advancing far enough in this sequence one can run any input $x$ with only $\Qopt + \epsilon$ heat for any $\epsilon > 0$. 
An important subtlety, however, is that one cannot compute how far into the sequence to advance so as to  be within $\epsilon$ of $\Qopt$ (if one could compute this, then $\Qopt$ would be computable, and not just upper-semicomputable). 

Finally, while we showed that $\Qopt$ is better than any computable realization in terms of heat generation, we also mentioned that it itself is only upper-semicomputable, not computable.  One might ask if there is some other upper-semicomputable realization (i.e., one whose heat function  can be approached by above) which is even better than $\Qopt$.  It is known that this is not the case: the optimality result of \cref{eq:qoptineq} holds not only for any computable $\Q$, but more generally for any upper-semicomputable $\Q$. 

\subsection{Comparison of coin-flipping and dominating realizations}
\label{sec:comparison}
We finish our discussion of the dominating realization by briefly comparing it to the coin-flipping realization. 

%
%
%
%
%
%
%
%

First, for both dominating and coin-flipping realizations, 
the minimal heat necessary to generate a given output $y$ on a UTM $U$, which we call the {thermodynamic complexity} of the realization, is bounded by a constant that does not depend on $y$. 
There is no \textit{a priori} relationship between those two constants, and in principle it is possible that, for all $y$, the thermodynamic complexity is larger under the dominating realization than the coin-flipping realization, or vice versa. In general, the constants
 will depend on the realized UTM $U$, as well as the 
 UTM used to define the conditional Kolmogorov complexity in \cref{eq:qoptdef} (which does not have to be the same as $U$). 
Second, %
to achieve bounded heat production for output $y$ under the coin-flipping realization, one must know the shortest
program for producing $y$, which is uncomputable. 
In contrast, 
to achieve bounded heat production for output $y$ under the dominating realization, it is sufficient to choose 
an input of the form $\printycmd$. 

%

%

Third, for both realizations, there is an infinite amount of {expected} heat generated, assuming that inputs are sampled from the EP-minimizing distribution.

Fourth, the coin-flipping realization is (by design) thermodynamically-reversible for input distribution $\pcoin$. The dominating realization, on the other hand, is not thermodynamically-reversible for any input distribution (see \cref{app:posEP}). %

Finally, neither the coin-flipping nor the dominating realization of a UTM has a computable heat function.  In fact, the heat function of the coin-flipping realization is not even upper-semicomputable.\footnote{ 
Recall that $\Qcoin(x)=\ell(x) + \log \uY(U(x))$. $\ell(\cdot)$ is
computable while $-\log \uY(\cdot)$ is upper-semicomputable~\cite[Thm.~4.3.3]{livi08}. This implies that $\Qcoin$ is ``lower-semicomputable'', meaning it can be approximated by an improving sequence of computable lower bounds.} 
This means that our results concerning the superiority of the dominating realization do not apply when comparing to the coin-flipping realization, and in particular it is not necessarily the case that $\Qcoin(x) \ge \Qopt(x) + O(1)$.  
Nonetheless, it turns out that for any UTM $U$, the additional heat incurred by the dominating realization on input $x$, beyond that incurred by the coin-flipping realization, is bound by a logarithmic term in the complexity of the output,
\begin{align}
\Qcoin(x) \ge \Qopt(x) -  O(\log K(\funcU(x))).  
\label{eq:logbetter}
\end{align}
(See \cref{app:domcoin} for proof.)  
Such logarithmic correction terms are considered inconsequential in some previous analyses of the thermodynamics of TMs~\cite{zure89a,bennett1993thermodynamics}.

\section{Heat vs. complexity tradeoff}
\label{sec:tradeoffex}

\newcommand{\zerostr}{{\small \text{`\texttt{00...00}'}}}

Our analysis of the dominating realization uncovered a tradeoff between heat and complexity faced by any computable physical process. 
In this section, we illustrate this tradeoff by analyzing the thermodynamics of erasing a long bit string.

As before, consider a physical system with a countable state space, which undergoes driving while coupled to a heat bath at temperature $T$. For notational simplicity, in this section we choose units so that $kT = 1$. Assume that the process realizes some deterministic and computable map from initial to final states, which we indicate generically as $f : \BB \to \BB$.  Now imagine that one observes a single realization of this physical process, in which initial state $x$ is mapped to final state $y = f(x)$.

Since this is a computable realization of $f$, it must obey the dominating realization bound of \cref{eq:qoptineq}.  
Plugging \cref{eq:qoptdef} into that inequality and rearranging gives
\begin{align}
\Q(x)/\ln 2 + K(\Q) + K(f) \ge K(x\vert y) + O(1),
\label{eq:tradeoffbound}
\end{align}
where we've used the assumption that $kT = 1$. 
This shows that there is a fundamental cost of $K(x\vert y)$ that is incurred by any computable realization that maps input $x$ to output $y$.  This fundamental cost can be paid either by generating a lot of heat (large $\Q(x)/\ln 2$), by having a high complexity heat function (large $K(\Q)$), or by realizing a high-complexity input-output function (large $K(f)$).  This tradeoff is illustrated in \cref{fig:tradeoff}.

\begin{figure}
\includegraphics[width=0.8\columnwidth]{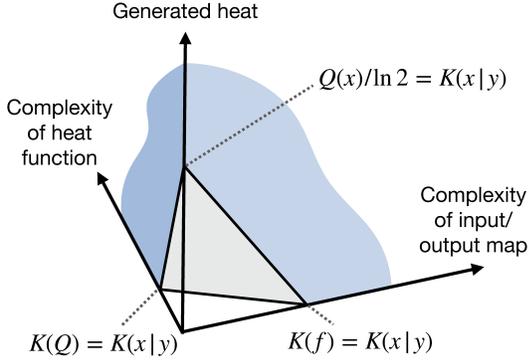}
\caption{Any computable process that realizes a deterministic input-output map $f$ faces a fundamental cost of $K(x\vert y)$ for mapping input $x$ to output $y=f(x)$.  
This cost can be paid through some combination of three different strategies: generating a large amount of heat, having a high complexity heat function, or having a high complexity input-output map $f$.  This tradeoff is illustrated on three axes, with blue indicating the feasible region. 
\label{fig:tradeoff}
}
\end{figure}

We demonstrate this tradeoff using an example of a process that erases a long binary string. 
In this example, $x$ is a long string consisting of $n$ binary digits, while the final state $y$ is a string of $n$ 0s, which we write $\zerostr$.  Assuming $x$ is incompressible (which is true for the vast majority of all strings), the fundamental cost of mapping $x\to y$ is given by $K(x\vert y)=K(x)\approx \ell(x)$ up to logarithmic factors~\cite{livi08}.  Different processes can pay this fundamental cost in different ways, thereby satisfying \cref{eq:tradeoffbound}:

\vspace{5pt}

\noindent (1) A process can generate a lot of heat. For example, in order to erase string $x$, the process can run an erasure map:
\begin{align}
f(x') := \zerostr \qquad\forall x',
\label{eq:erasemap}
\end{align}
while using the dominating implementation. In this case, $\Q(x)/\ln 2 =  K(x\vert y)$ by \cref{eq:qoptdef}.

\vspace{5pt}

\noindent (2) A process can have a high-complexity heat function, so that $K(Q) \ge \ell(x)$.  For example, one can tweak the dominating realization of the erasure map, so that the heat values for input $x$ and the input consisting of all 0s are swapped:
\begin{align*}
 	Q(x') :=\begin{cases}
 	\Qopt(x') & \text{if $x' \not\in\{x,\zerostr\}$} \\
 	\Qopt(\zerostr) & \text{if $x' = x$}\\
\Qopt(x) & \text{if $x' = \zerostr$}
\end{cases}
 	\end{align*}
 One can verify that since $\Qopt$ satisfies condition 2 in \cref{prop:second}, so does this $\Q$. 
 Moreover, this realization generates a small amount of heat when erasing $x$,
 \begin{align*}
 \Q(x) &= \Qopt(\zerostr) \\&= K(\zerostr\vert \zerostr) \approx 0.
 \end{align*}
Note, however, that the long input string $x$ is now ``hard-coded'' into the definition of the heat function $\Q$, leading to a large value of $K(\Q)$.

\vspace{5pt}

\noindent (3) A process can realize a high-complexity input-output map $f$, so that $K(f) \ge K(x\vert y)$.  This strategy could be used, for example, by a process which implements the following logically reversible map:
\begin{align*}
 	f(x') := \begin{cases} x' & \text{if $x'\not \in \{x,\zerostr\}$} \\
 	\zerostr & \text{if $x' = x$}\\
x & \text{if $x' = \zerostr$}
 	\end{cases}
 	\end{align*} 
Since logically reversible function can be carried out without generating heat, it is possible to implement this $f$ while achieving $Q(x')=0$ for all $x'$. In this case, not only does erasing $x$ not generate any heat, $Q(x)=0$, but the heat function has low complexity, $K(Q)\approx 0$.    Now, however, the long input string $x$ is  ``hard-coded'' into the  definition of the input-output map $f$, leading to a large value of $K(f)$.

We finish by noting that in a series of papers by Zurek and others~\cite{zure89a,zure89b,bennett1993thermodynamics,bennett1998information,baumeler2019free,li1992mathematical,caves1990entropy,caves1993information,zurek1990algorithmic},
it was argued that the conditional Kolmogorov complexity $K(x\vert y)$  is ``the minimal thermodynamic cost'' of %
 computing some output $y$ from input $x$.  %
However, most of these early papers
were written before the development of modern nonequilibrium statistical
physics. %
As a result, the arguments in those papers are 
rather informal, which in turn makes it difficult to translate them
in a fully rigorous manner into modern nonequilibrium statistical physics.
(See Sec.\,14.4 in~\cite{wolpert_thermo_comp_review_2019} for one possible translation.)
To give one example of these difficulties, those earlier analyses quantified
the ``thermodynamic cost'' in terms of
the number of physical bits (binary degrees of freedom) that are erased during that computation, independent of
the initial probability distributions over those binary degrees of freedom. However, we now know that
minimal heat generation is given by changes in Shannon entropy, i.e., in terms of statistical bits rather than physical bits.  
Relatedly, these papers led to some proposals that the foundations of statistical physics be changed, so that thermodynamic entropy is identified not only with Shannon entropy, but also a Kolmogorov complexity term \cite{zure89b,livi08}.

%
%
%

In contrast, our analysis is 
grounded in modern nonequilibrium physics, and does not involve any foundational modifications to the definition of thermodynamic entropy. Moreover, it covers some %
issues not considered in earlier analyses.  
In particular, we show that the lower bound of $K(x \vert y)$ is a cost that in general applies %
only to computable realizations (i.e., ones with a computable heat function),  not for all possible realizations, as implied in the earlier papers. The significance of this restriction depends on the legitimacy of the physical Church-Turing thesis. 
Finally, we also demonstrate different ways in which 
one can pay the fundamental cost $K(x \vert y)$: by either generating heat, by having a large Kolmogorov complexity of the heat function $K({\Q})$, or by having a large  Kolmogorov complexity of the input-output map, $K(f)$. 

\section{Discussion}
\label{sec:conclusion}




In this paper we combine Algorithmic Information Theory (AIT) and nonequilibrium statistical physics 
to analyze the thermodynamics of TMs.  We consider a physical process that realizes a deterministic input-output function, representing the computation performed by some TM. 
We derive numerous results concerning two different realizations of TM: a \emph{coin-flipping realization}, which is designed to be thermodynamically reversible when fed with random input bits, and a \emph{dominating realization}, which is designed to generate less heat than any computable realization.

Using our analysis of the dominating realization, we uncover a fundamental tradeoff, faced by any computable realization of a deterministic input-output map, between heat generation, the Kolmogorov complexity of the heat function, and the Kolmogorov complexity of the input-output map. An interesting topic for future research is how the Kolmogorov complexity of the heat function and the input-output map relates to the ``physical complexity'' of the driving process, as commonly understood in physics (e.g., whether the Hamiltonians must have many-body interactions, etc.).

For simplicity, in this paper we represented a TM $\T$ as a physical system  whose dynamics carries out the partial function  $\funcT : \BB \to \BB$ during some finite time interval $[0,\ft]$.  This representation allowed us to abstract away many implementation details of the realization, such as the fact that a TM consists of a separate tape, head, and pointer variables, and that a TM operates in a sequence of discrete steps. 
Essentially, this representation does not distinguish whether the physical process operates via the same sequence of steps as a TM, or simply implements a ``lookup table'' that maps outputs to inputs.

While this representation simplifies our analysis, it provides no guidance on how to actually construct a physical process that realizes a TM in the laboratory, and it leaves implicit some important issues. Alternatively, one could represent a realization of a TM  in a more conventional and ``mechanistic'' way, as a dynamical system over the state of the TM's tape, pointer, and head, which evolves iteratively according to the update function of the TM %
until the head reaches the halt state. 
In contrast to the representation we adopted, this kind of mechanistic representation
could easily be physically constructed, and would correspond more closely to the step-by-step operation of real-world physical computers. 
%
%
%
Moreover, this kind of mechanistic representation could be used to analyze the thermodynamic costs of TMs in a more realistic manner. For example, it could be used to analyze 
how the heat and EP incurred by the TM depends on the number of steps taken.  As another example, it could be used to impose constraints on how the degrees of freedom of the head, tape, and pointer can be coupled together (e.g., via interaction terms of applied Hamiltonians). One might postulate, for instance, that the head of the TM can only interact with  tape locations that are located near the pointer.  These kinds of constraint will generally increase the heat and EP incurred by each step of the TM~\cite{wolpert_thermo_comp_review_2019,circuits2020}. 
These complications concerning the thermodynamics of more mechanistic representations of TMs
are absent from the analysis in this paper, and are topics of future research.

\acknowledgments{\emph{Acknowledgments} --- We would like to thank Josh Grochow, Cris Moore, Daniel Polani, Simon DeDeo, Damian Sowinski, Eric Libby, Sankaran Ramakrishnan, Bernat Corominas-Murtra, and Brendan D. Tracey for many stimulating discussions,  
and the Santa Fe Institute for helping to support this research. This paper was made possible through the support of Grant No. TWCF0079/AB47 from the Templeton World Charity Foundation, Grant No. CHE-1648973 from the U.S. National Science Foundation, Grant No. FQXi-RFP-1622 from the Foundational Questions Institute (FQXi), and Grant No. FQXi-RFP-IPW-1912 from the Foundational Questions Institute (FQXi) and Fetzer Franklin Fund, a donor advised fund of Silicon Valley Community Foundation. 
The opinions expressed in this paper are those of the author and do not necessarily 
reflect the view of Templeton World Charity Foundation.}
%


\bibliographystyle{ieeetr}
\bibliography{thermo_refs_artemy,thermo_refs.main,thermo_refs_2}

\appendix
\clearpage

\section{Models of single-tape TMs}
\label{app:TMs}

In this appendix we present a formal definition of a single-tape TM. 


%
%
%

%
%
%

%
%
%
%
%
%
%
%
%

\newcommand{\updatefunc}{f}
In \cref{sec:TMs}, we define the state of a TM as being composed of a tape state $s \in A^\infty$, a pointer state $v \in \mathbb{N}$, and head state $h \in H$. Here, $A$ is a finite alphabet of tape symbols which includes a special ``blank'' symbol, while $H$ is a finite set of head states which includes a special ``start'' head state and a special ``halt'' head state.  
Any particular value of the triple  $(s, v, h)$ is called 
an \termdef{instantaneous description} (ID) of the TM. The
dynamics of a particular
TM is given by iteratively applying an \termdef{update function} $\updatefunc$ to the ID,%
\begin{align}
\updatefunc : (s,v, h) \mapsto (s', v', h') \,. 
\end{align}
Following standard definitions, we assume that $\updatefunc(s,v, h)$ only depends on $(s(v), h)$, i.e., the next ID of the TM
can only depend on the current state of the head and the current 
contents of the tape $s$ at position $v$. We also assume that the new value of the pointer $v'$ does not differ by more than $1$
from $v$, and that the tape state $s'$ be identical to the tape state $s$ at all positions, except possibly position $v$. By iteratively applying $\updatefunc$, the head moves back and forth along 
the tape, while both changing its state 
as well as reading and writing symbols onto the tape at its current position.

At the beginning of a computation, the state of the TM must be a \emph{valid initial ID}, 
meaning that the head $h$ is in 
the start state, the pointer is set to $v=1$, and the tape $s$ consists 
of finite string of non-blank symbols, followed by an infinite sequence of blank symbols. 
The TM then visits a sequence of IDs by iteratively applying the update function $\updatefunc$. The TM stops if the head ever reaches the halt state (i.e., any ID where the head in the halt state is a
fixed point of $\updatefunc$). In general, there can be valid initial IDs for which the TM never halts.

For simplicity, we assume that $0$ and $1$ are elements of the alphabet $A$, and that the non-blank finite string at the beginning of the initial tape state is some $x\in \BB$.  
In addition, we assume that if the head of the TM reaches a halt state after starting from some valid initial ID, %
then at that time the pointer is set  to $1$ and the final tape state begins with some $y\in\BB$, followed by  blank symbols. 
In that case, we refer to the string $x \in \BB$ as the \termdef{input} or \termdef{program} for the TM, and the corresponding string $y\in\BB$ as the 
\termdef{output} of the TM for program $x$.

Given these assumption, we can represent the overall computation performed by a TM $\T$ as a partial function $\funcT : \BB \to \BB$. 
Here, $\funcT(x)=y$ indicates that when the TM is initialized with its tape containing $x$ followed by an infinite sequence of blank symbols, then it will halt with its tape containing $y$ followed by an infinite sequence of blank symbols.  If the TM does not halt for some particular initial tape state $x$, then the value of $\funcT(x)$ is undefined (for this reason, in general $\funcT$ is a partial function). 
When
we talk about a realization of a TM $\T$ in the main text, we refer to a physical
process over a countable state space, whose dynamics from initial states to final states can be mapped onto the partial function $\funcT$ implemented by some TM $\T$.

As we mention in  the main text, we assume that any TM under consideration is a prefix TM, meaning that it has a prefix-free halting set. 
Prefix TMs are typically TMs with multiple tapes, where one of the tapes is a read-only input tape that is read left-to-right~\cite{livi08}.  If this kind of multi-tape machine halts after reading some string $x$ from the input tape, it means that the machine did not halt after reading some string $x'$ on the input tape which is a strict prefix of $x$ (otherwise, it would never get to read-in all of $x$), thereby guaranteeing the prefix property. For simplicity, however, in this paper we assume that the prefix TM is single-tape.  This can be done without loss of generality, as it is always possible to transform 
a prefix TM with multiple tapes into an equivalent single-tape prefix
TM, using any of the conventional techniques for transforming between
multi-tape and single-tape TMs (see \cite[Thm.~2.1]{papadimitriou2003computational} and \cite{sipser2006introduction} for details). Note that these techniques may involve adding additional symbols to the tape alphabet $A$, which may be used at intermediate steps of the computation.




%
%
%
%
%
%
%
%
%
%
%

%
%
%
%
%
%
%
%

%
%
%
%
%
%

\section{Decomposition of entropy production}
\label{app:EP}

In this appendix, we derive a useful decomposition of the EP incurred by a realization of a deterministic input-output function.  We also relate this decomposition to our previous work, which analyzed the dependence of EP on the initial distribution of a process~\cite{kolchinsky2016dependence,wolpert_thermo_comp_review_2019,circuits2020}. 

%

Consider some physical process that realizes the function $f : \sX \to \sX$, in the sense of \cref{eq:realizes}. 
Then, the conditional distribution of an initial state 
$x \in \dom f$ given final state $f(x)$ can be written as 
\begin{align}
\pXgY(x \vert f(x)) := \dfrac{\px}{\sum_{x':f(x')=f(x)}p_X(x')}  .
\label{eq:appdefpxy}
\end{align}
We use this expression to rewrite the EP from \cref{eq:ep1_alternative} as%
\begin{align}
& \EP(\pinit) = %
\label{eq:appep0}
 \\
& \quad \sum_{x} \px \bigg[\eplog \frac{ \pXgY(x \vert f(x))} { e^{-\heatToEntropy{\Q(x)} - \logZ(f(X))}} 
            - \logZ(f(x))\bigg],  \nonumber
\end{align}
where  %
we have defined 
\begin{align}
\label{eq:zdef}
Z(y) :=\sum_{x: f(x)=y} \boltz{{\Q(x)}}.
\end{align}
Now, define the following conditional distribution,
\begin{align}
\priorXgY(x\vert f(x)) :=  \boltz{\Q(x)} - \logZ(f(x)) .
\label{eq:mudef}
\end{align}
Using this definition, we can 
further rewrite \cref{eq:appep0} as
\begin{align}
\EP(\pinit) & = D(\pXgY \Vert \priorXgY) - \langle \logZ(f(x)) \rangle_{\pinit},
\label{eq:decomp2} 
\end{align}
where $D(\pXgY \Vert \priorXgY)$ indicates the conditional KL divergence between the conditional distribution $\pXgY$ and 
$\priorXgY$~\cite{cover_elements_2012}.

As we show below in \cref{eq:appres}, $-\logZ(f(x)) \ge 0$ for all $x$. Thus, \cref{eq:decomp2}  implies  $\EP(\pinit) \ge D(\pXgY \Vert \priorXgY)$.  Note that this lower bound is non-negative, and vanishes whenever $\pXgY = \priorXgY$. This means that $\priorXgY$, as defined in \cref{eq:mudef}, encodes that  conditional probability of  inputs $x$ given outputs $f(x)$ that achieves minimal EP for a realization of $f$ with heat function $Q$. 

In our previous work, we have sometimes referred to the conditional KL divergence in \cref{eq:decomp2}  as \emph{mismatch cost}. Using the chain rule for KL divergence, we write mismatch cost as
\begin{align*}
D(\pXgY \Vert \priorXgY) = D(\pinit \Vert \prior_X) - D(p_{f(X)} \Vert w_{f(X)}) \,,
\end{align*}
where $\prior_{f(X)}(y) = \sum_{x : f(x) = y} \prior_X(x)$, while $\prior_X(x)$ is any distribution that obeys
\[
\prior_X(x) / \prior_X(x') = e^{\heatToEntropy{[\Q(x') - \Q(x)]}} \quad \forall x,x':f(x)=f(x').
\] 
In our previous, we referred to the distribution $\prior_X(x)$ as a \textit{prior}. (This term was originally motivated
by a Bayesian interpretation of EP~\cite{kolchinsky2016dependence}.) 
As long as $|\img f| > 1$, there are an infinite number of
priors for any given $\priorXgY$, since the relative probabilities of any pair $x, x'$ with $f(x)\ne f(x')$ are unconstrained.
%

In our previous work~\cite{wolpert_thermo_comp_review_2019,circuits2020}, we referred to the term
$-\langle \logZ (f(X)) \rangle_{\pinit}$ in \cref{eq:decomp2} as the \emph{residual EP}. 
Observe that for any $y \in \img f$,
\begin{align}
\EP(w_{X|f(X)=y})& = D(w_{X|f(X)=y} \Vert w_{X|f(X)})  -\logZ(y) \nonumber \\
& = -\logZ(y) .\label{eq:appres}
\end{align}
Since $\EP(w_{X|f(X)=y}) \ge 0$ by the second law, 
$-\logZ(y)$ is non-negative for all $y \in \img f$ and therefore residual EP is always non-negative.  Note also that the residual EP is an expectation under $\pinit$, thus it is linear in $\pinit$.  
In fact, it only depends on the probabilities assigned to each output $p_{f(X)}(y)$, not the conditional distribution of inputs corresponding to each output. 
In our other work~\cite{circuits2020}, we've  sometimes called the indexed set $\{-\logZ(y)\}_{y}$
the \textit{residual EP parameter}.  

Finally, define an \textit{island} of $f$
as a pre-image $f^{-1}(y)$ for some $y$, with $L(f)$ the set of all
islands of $U$.  We can rewrite \cref{eq:decomp2} as
\begin{align*}
\EP(\pinit) = \sum_{\mathclap{c \in L(f)}} p(c)\left[ D\left(p_{X|X\in c} \Vert w_{X|X\in c} \right) -  \logZ(f(c)) \right],
\end{align*}
where $p(c)=\sum_{x\in c} p_X(x)$. 
Intuitively, this expression shows that any realization of the function $f$ can be thought of a
set of (island-indexed) ``parallel'' processes,
operating independently of one another on non-overlapping subsets of $\sX$,
each generating EP given by the associated mismatch cost and residual EP.

This form of mismatch cost, residual EP, and island decomposition was introduced  in~\cite{kolchinsky2016dependence,wolpert_thermo_comp_review_2019,circuits2020}. It holds 
even in the general case of non-deterministic dynamics, with an appropriate (more general)
definition of the prior $\prior_X$ and the island decomposition.  
However, that previous work on  mismatch cost and residual EP 
assumed finite state spaces. The derivation presented above does not have that restriction.

\def\SetN{\mathcal{A}_n}

\section{Proof of \cref{prop:second}}
\label{app:EP-prop}

The following proof will make use of the decomposition of EP derived in \cref{app:EP}.

\propsecond*
\begin{proof}
Note that condition 1 follows from condition 3 by the second law of thermodynamics.  To show equivalence of all three conditions, we proceed in the following way:
\begin{enumerate}
\item We show that condition 2 is implied by condition 1.
\item We show that condition 1 is implied by condition 2.
\item We show by construction that condition 2 implies condition 3.
\end{enumerate}

Given that $\sX$ is countable, we assume that $\sX \subseteq \N$. This is done without loss of generality: if elements of $\sX$ are not natural numbers, one can put a total order on $\sX$ using the natural numbers.  

We now prove that condition 1 implies condition 2. 
First, define the function $F$ to refer the expression in \cref{eq:prop1ep},
\begin{align}
F(\pinit) = \sum_{x} \px \big[\G(x) -\eplog \pfx + \eplog \px \big].
\label{eq:appFdef}
\end{align}
%
Let $\SetN(y)$ indicate the first $n$ elements of $f^{-1}(y)$, and  %
define the initial distribution
\[
\pinit^{(n)}(x) = \begin{cases}
e^{-\G(x)} / Z_n(y) & \text{if $x \in \SetN(y)$}\\
0 & \text{otherwise}
\end{cases} ,
\]
where $Z_n(y) =  \sum_{x \in \SetN(y)} e^{-\G(x)}$. Note that $\supp \pinit^{(n)} \subseteq \dom f$. 
Plugging into \cref{eq:appFdef} and simplifying gives  
\[
F(\pinit^{(n)}) = -\logZ_n(y) \ge 0  \,,
\]
or equivalently $Z_n(y) \le 1$. 
Since this holds for all $n$, 
\begin{align}
Z(y) &= \sum_{\mathclap{x : f(x) = y}}\; e^{-\G(x)}= \lim_{n \rightarrow \infty} Z_n(y)  \le 1 \,.
\label{eq:zresult}
\end{align}

We now prove that condition 1 is implied by condition 2. %
Define $\priorXgY(x\vert f(x))$ as in \cref{eq:mudef}, while taking $\heatToEntropy{\Q}=\G$. Then, use the results in \cref{app:EP} to rewrite $F$ as
\begin{align*}
F(\pinit) &= D(\pXgY \Vert \priorXgY) - \langle \logZ(f(X)) \rangle_{\pinit} \\
& \ge D(\pXgY \Vert \priorXgY)  \ge 0 .
\end{align*}
The first inequality follows from the assumption that $Z(y) = \sum_{x : f(x) = y} e^{-\G(x)} \le 1$ for all $y\in\img f$, and the second inequality follows from the non-negativity of conditional KL divergence~\cite{cover_elements_2012}.

\newcommand\sZ{{\mathcal{W}}}
\newcommand\yix{{\gamma}}

The rest of this proof shows by construction that condition 3 follows from condition 2. 
For simplicity, assume that the physical process has access to a set of ``auxiliary'' states, one for each $y \in \img f$. We use $x_y$ to indicate the auxiliary state corresponding to each $y$, and assume that $x_y \not \in \dom f$. For notational convenience, let $\sZ := \dom f \cup \{ x_y : y \in \img f \}$. Then, define the following function $\hat{f} : \sZ \to \sX$,
\begin{align*}
\text{For any $x\in \dom f$,}\quad& \hat{f}(x) := f(x) \,,\\
\text{For any $y\in \img f$,}\quad& \hat{f}(x_y) := y\,.
\end{align*}
In words, any $x$ in the domain of $f$ is mapped by $\hat{f}$ to $f(x)$, while any auxiliary state $x_y$ is mapped by $\hat{f}$ to $y$.

Now, 
define the following Hamiltonian $H:\sZ \to \mathbb{R} \cup \{\infty\}$,
\begin{align}
&\!\!\!\forall x \in \dom f: H(x) :=  f(x) + \entropyToHeat{ \G(x) } \label{eq:appHdef0} \\
&\!\!\!\forall y \in \img f: H(x_y) :=y - \entropyToHeat{ \ln \left(1 - \sum_{\mathclap{x:f(x) = y}} e^{-\G(x)} \right)}.  \label{eq:appHdef1}
\end{align}
We use $\pi(w) = \boltz{{H(w)}}/Z $ to indicate the Boltzmann distribution for Hamiltonian $H$, where $Z=\sum_{x \in \sZ} \boltz{{H(x)}}$ is the partition function.
Note that the partition function converges when $\beta>0$, 
\begin{align}
Z &=\sum_{y \in \img f} \Bigg[ \boltz{{H(x_y)}} + \sum_{x : f(x)=y} \boltz{{H(x)}} \Bigg] \label{eq:appL2} \\
& = \sum_{y \in \img f} \boltz{{y}}  \le \sum_{i \in\N} \boltz{{i}} = 1/(e^{\heatToEntropy{1}}-1).\nonumber
\end{align}
To derive the second line, we plugged \cref{eq:appHdef0,eq:appHdef1} into \cref{eq:appL2} and simplified.

We now consider the following physical process over $t\in [0,\ft]$, applied to a system coupled to a work reservoir and a heat bath at  temperature $T$:
\begin{enumerate}
	\item At $t=0$, the Hamiltonian $H$ is applied to the system. 

	\item Over $t \in (0,\tau]$, the system is allowed to freely relax toward equilibrium. However, the only allowed transitions are those between pairs of states $w,w'$ that have $\hat{f}(w)=\hat{f}(w')$. We assume that by $t=\tau$, the system has reached a stationary distribution.

	\item Over $t \in (\tau,\ft]$, the system undergoes a quasistatic physical process that implements the map $\hat{f}$ from initial to final states, and does so in a thermodynamically reversible way for initial distribution $\pi$.  There are numerous known ways of constructing such a process~\cite{turgut_relations_2009,maroney2009generalizing,wolpert_spacetime_2019}.
\end{enumerate}
Note that the above procedure assumes a separation of timescales (i.e., the relaxation time of the system is infinitely faster than $\tau$ and $\ft-\tau$). Step (3) also assumes an idealized heat bath (infinite heat capacity, weak coupling, infinitely fast relaxation time~\cite{deffner2013information}).

The above procedure will map any $x \in \dom f$ to final state $f(x)$.   Let $\Q$ indicate the heat function of this process. We will show that $\heatToEntropy{\Q(x)}=\G(x)$ for any $x \in \dom f$. 
First, let $\delta_x$ indicate an initial distribution which is a delta function over some state $x$. Note that
\begin{align}
\EP(\delta_x) &= S(\delta_{f(x)}) - S(\delta_x) + \heatToEntropy{\langle Q\rangle_{\delta_x}} =  \heatToEntropy{Q(x)} ,
\label{app:ep000}
\end{align}
where we've used the fact that $S(\delta_x)=S(\delta_{f(x)})=0$. 
We then analyze $\EP(\delta_x)$. 
Step (1) and step (3) in the above  construction incur no EP.  For step (2), EP incurred during free relaxation from $t=0$ to $t=\tau$ is given by
\begin{align}
\EP(\delta_x) = D(\delta_x \Vert \pi) - D(p^\tau_x \Vert \pi) \,, \label{eq:appdropkl}
\end{align}
where $p^\tau_x$ is the state distribution at time $\tau$, given that the system started in distribution $\delta_x$ at $t=0$.  By construction, $p^\tau_x$ will be equal to the equilibrium distribution restricted to a subset of states,
\begin{align*}
p^\tau_x(w) = \frac{ \delta(\hat{f}(w),\hat{f}(x))\pi(w) } {\sum_{w'} \delta(\hat{f}(w'),\hat{f}(x)) \pi(w') } .
\end{align*}
It can be verified, using the definition of $\delta_x$ and $\pi$, that
\begin{align*}
D(\delta_x \Vert \pi) = \heatToEntropy{f(x)} + G(x) + \ln Z.
\end{align*}
Similarly, it can be verified using the definition of $p^\tau_x$ that
\begin{align*}
D(p^\tau_x \Vert \pi) &= \heatToEntropy{f(x)} + \ln Z.
\end{align*}
Plugging these two KL divergences into \cref{eq:appdropkl} gives 
\begin{align}
\EP(\delta_x) = G(x).
\label{eq:ap01}
\end{align}
Combining with  \cref{app:ep000} gives $\heatToEntropy{\Q(x)} = \EP(\delta_x) = \G(x)$.  
\end{proof}

It can be verified that the physical process constructed in the proof of \cref{prop:second} is thermodynamically reversible if it is started with the initial equilibrium distribution $\pi_X$, so that the free relaxation in step 2 incurs no EP. Generally, this equilibrium distribution will have support on the auxiliary states, which are outside of $\dom f$. However, consider the case when \cref{eq:prop1ineq} is an equality for all $y \in \img f$. Then, the definition in \cref{eq:appHdef1} gives $H(x_y)=\infty$ and $\pi_X(x_y) = 0$ for all $y\in\img f$.  In this case, the input distribution $p_X = \pi_X$ obeys $\supp p_X \subseteq \dom f$ and achieves zero EP. Moreover, using the decomposition in \cref{app:EP}, it can be verified that if \cref{eq:prop1ineq} is an equality for all $y\in \img f$, then any input distribution that obeys $\pXgY = \pi_{X|f(X)}$, as defined in \cref{eq:appdefpxy}, 
also achieves zero EP.


%
%
%
%
%
%
%
%
%
%
%
%
%
%
%
%
%
%
%
%
%
%
%
%
%
%

%
%
%
%
%
%
%
%
%
%
%
%
%
%

\section{$O(1)$ heat for coin-flipping realization is uncomputable}
\label{app:nootherconst}

Let $\funcU$ indicate the partial function computed by some UTM $\U$. 
Imagine there is some computable function $f$ such that for any $y$, $f(y)$ returns an input for $\funcU$ that outputs $y$ and generates bounded heat under the coin-flipping realization (i.e., $\funcU(f(y))=y$ and $\Qcoin(f(y))=O(1)$). Then,  by \cref{eq:coinwork}, it must be that $\ell(f(y)) = K(y)+ O(1)$. Since $\ell(\cdot)$ is a computable function, this would in turn imply that there is a computable function $g(y) =  K(y)+ O(1)$. However, such a function cannot exist, as shown in the following proposition.

\begin{proposition}
\label{prop:nootherconst}
There is no computable partial function $g : \BB \to \N$ such that for all $y$,
\begin{align}
g(y) = K(y) + O(1) \,.
\end{align}
\end{proposition}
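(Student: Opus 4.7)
The plan is to adapt the classical Berry-paradox argument showing that $K$ itself is not computable. Suppose for contradiction that there is a computable $g$ with $|g(y) - K(y)| \le c$ for some constant $c$; note that since $K(y)$ is defined for every $y \in \BB$, the hypothesis forces $g$ to be total.

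First, I would argue that for every $n \in \N$ there exists some $y_n \in \BB$ with $g(y_n) \ge n$. Since $K$ is unbounded (as recalled in \cref{sec:AIT}), for any $n$ one can find $y$ with $K(y) \ge n + c$, so $g(y) \ge K(y) - c \ge n$. Fix a canonical enumeration of $\BB$ (say, by length then lexicographic order) and let $y_n$ be the first $y$ in this enumeration with $g(y) \ge n$.

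Next I would describe a TM $T$ which, on input $n$ (supplied in a self-delimiting prefix encoding of length $\ell(n) = \log_2 n + O(\log \log n)$), searches through the enumeration of $\BB$, evaluates $g$ on each string (possible because $g$ is computable and total), and outputs the first $y$ with $g(y) \ge n$. This $T$ halts on every $n$ by the preceding paragraph, so by the definition of Kolmogorov complexity and the invariance theorem,
\begin{align*}
K(y_n) \le \log_2 n + O(\log \log n).
\end{align*}
On the other hand, by the construction of $y_n$ and the hypothesis on $g$,
\begin{align*}
K(y_n) \ge g(y_n) - c \ge n - c.
\end{align*}
Combining the two inequalities yields $n - c \le \log_2 n + O(\log \log n)$, which fails for all sufficiently large $n$. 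This contradiction completes the proof.

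The only subtlety worth being careful about is step 3, namely ensuring the encoding of $n$ used as input to $T$ is self-delimiting (so as to fit the prefix-TM framework of \cref{sec:TMs}); any standard prefix encoding of integers works and costs only an additional $O(\log \log n)$ term, which is negligible against the linear lower bound $n - c$. No step requires anything beyond the computability of $g$, the unboundedness of $K$, and the invariance theorem, all of which are already established in the excerpt.
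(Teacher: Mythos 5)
Your proof is correct, but it takes a genuinely different route from the paper's. The paper argues via semimeasures: if a computable $g = K + O(1)$ existed, then $2^{-g(y)}$ would be a \emph{computable} semimeasure multiplicatively dominating $2^{-K(y)}$, hence (by transitivity and the universality of $2^{-K(y)}$ among lower-semicomputable semimeasures, cited from Li--Vit\'anyi) a computable semimeasure dominating all computable semimeasures, which is known not to exist. Your argument is the classical Berry-paradox diagonalization: compute the first string $y_n$ with $g(y_n)\ge n$, observe $K(y_n)=O(\log n)$ from the description via $n$, and contradict $K(y_n)\ge g(y_n)-c\ge n-c$. Your version is more elementary and self-contained --- it needs only the unboundedness of $K$, the computability and totality of $g$ (which you correctly note is forced by the hypothesis), and the invariance theorem, all available in \cref{sec:AIT}, whereas the paper's proof imports nontrivial machinery about universal semimeasures. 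The paper's route is shorter given those cited results and meshes with its repeated use of Levin's coding theorem elsewhere; it also yields the slightly stronger structural fact that no computable semimeasure can dominate $2^{-K}$. One minor point: your $\log_2 n + O(\log\log n)$ bound on the prefix encoding of $n$ is finer than needed --- the simple code of length $2\lceil\log_2 n\rceil+1$ used in the paper's \cref{app:infiniteheat} already gives $K(y_n)=O(\log n)$, which suffices against the linear lower bound.
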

\begin{proof}
We say that $p_Y(y)$ is a \emph{semimeasure} if $p_Y(y) \ge 0$ for all $y$ and $\sum_y p_Y(y) \le 1$ (i.e., it is a non-normalized probability distribution). We say that a semimeasure $p_Y(y)$ \emph{(multiplicatively) dominates} another semimeasure $q_Y(y)$ if there is some constant $c >0$ such that $p_Y(y) \ge c \cdot q_Y(y)$ for all $y$.

Assume that a computable $g(y) = K(y) + O(1)$ exists. Then, $q_Y(y):=2^{-g(y)}$ would be a computable semimeasure that dominates $p_Y(y) := 2^{-K(y)}$. It is known that $p_Y(y)$ dominates every computable semimeasure~\cite[Thm.~4.3.3 and Cor.~4.3.1]{livi08}. Since domination is transitive, if $g(y)$ were computable then $q_Y(y)$ would be a computable semimeasure that dominates every computable semimeasure.  However, such a semimeasure cannot exist by Lemma~4.3.1 in \cite{livi08}.
\end{proof}

\section{Proof of \cref{eq:qoptineq}}
\label{app:dom}

Let $f$ indicate any computable partial function. 
In this appendix, we show that the dominating realization of $f$, with heat function
\begin{align}
\Qopt(x) = \ktlntwo \cdot K(x\vert f(x)),
\label{eq:appqopt}
\end{align}
is better than any other realization of $f$ with an upper-semicomputable heat function $\Q$, up to an additive constant.  

We first prove the following two useful results.

\begin{lemma}
\label{lem:applem2}
For any partial function $f : \BB \to \BB$, 
$$\sum_{\mathclap{x : f(x)=y}}e^{-\ln 2 \cdot K(x\vert y)} \le 1 \qquad \forall y\in\img f.$$
\end{lemma}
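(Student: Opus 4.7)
The plan is to reduce this to the standard Kraft inequality for prefix-free (conditional) Kolmogorov complexity. Recall that in this paper all UTMs under consideration are prefix machines, meaning that $\dom \funcU$ is a prefix-free subset of $\BB$. Consequently, for any fixed $y$, the set of shortest programs $\{z_x^*\}_{x}$ with $\funcU(z_x^*, y) = x$ and $\ell(z_x^*) = K(x \vert y)$ lies in the prefix-free set $\{z : (z,y) \in \dom \funcU\}$.

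Concretely, the first step is to invoke the classical Kraft inequality: for any prefix-free set $S \subseteq \BB$, one has $\sum_{z \in S} 2^{-\ell(z)} \le 1$. Applying this to $S = \{z_x^* : x \in \BB\}$ (which is prefix-free because it is a subset of the prefix-free halting set of the UTM defining $K(\cdot \vert y)$, noting that distinct $x$ force distinct $z_x^*$ since $\funcU$ is deterministic) yields the unconditional statement
\begin{align}
\sum_{x \in \BB} 2^{-K(x \vert y)} \le 1,
\end{align}
which is the standard conditional-Kraft inequality stated in Li--Vit\'anyi.

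The second step is simply to restrict the sum: for any $y \in \img f$, the set $\{x : f(x) = y\}$ is a subset of $\BB$, so dropping the nonnegative terms with $f(x) \ne y$ preserves the inequality, giving
\begin{align}
\sum_{x : f(x) = y} 2^{-K(x \vert y)} \;\le\; \sum_{x \in \BB} 2^{-K(x\vert y)} \;\le\; 1.
\end{align}
Rewriting $2^{-K(x\vert y)} = e^{-\ln 2 \cdot K(x \vert y)}$ yields the claim.

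There is essentially no obstacle here: the content of the lemma is that $K(\cdot \vert y)$ behaves like a prefix codeword length, and all the work is done by the prefix property of the UTM used to define $K$. The only subtlety worth flagging in the write-up is the justification that $\{z_x^*\}_x$ is indeed prefix-free (it inherits this from the halting set of the conditional UTM $z \mapsto \funcU(z,y)$), after which Kraft and monotonicity of the sum do all the work.
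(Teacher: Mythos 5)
Your proposal is correct and follows essentially the same route as the paper's proof: rewrite the summand as $2^{-K(x\vert y)}$, enlarge the sum to all $x \in \BB$, and invoke the Kraft inequality for the prefix-free code induced by conditional Kolmogorov complexity. The only difference is that you spell out why the shortest-program set is prefix-free, whereas the paper delegates that fact to a citation of Li--Vit\'anyi.
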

\begin{proof}
For all $y \in \img f$, we have the following: 
\begin{align*}
\sum_{\mathclap{x : f(x)=y}}e^{-\ln 2 \cdot K(x\vert y)} = \sum_{\mathclap{x : f(x)=y}} 2^{-K(x\vert y)} \le \sum_{\mathclap{x \in \BB}} 2^{-K(x\vert y)} .
\end{align*}
In addition, we have the bound
\begin{align}
  \sum_{\mathclap{x \in \BB}} 2^{-K(x\vert y)} \le 1,
  \label{eq:appkraft}
\end{align}
which comes from Kraft's inequality and the fact that, for any given $y$, the set $\{K(x\vert y) : x\in \BB\}$ specifies the lengths of a prefix-free code~\cite[p.~252 and p.~287]{livi08}.  Combining gives the desired result.
\end{proof}

\newcommand{\pbQ}{\Q}
\begin{proposition}
\label{prop:best}
Let $f : \BB \to \BB$ be a  computable partial function, $\pbQ : \BB \to \R$ a upper-semicomputable partial function with $\dom \pbQ \supseteq \dom f$. If for all $y \in \img f$,
\begin{align}
\sum_{x : f(x)=y} e^{-\pbQ(x)} \le 1 ,
\label{eq:ineqapp9}
\end{align}
then for all $x\in \dom f$,
\begin{align}
\pbQ(x) \ge \ln 2 \cdot [ K(x \vert f(x)) - K(\pbQ,f)] + O(1) \,,
\label{eq:bestres0}
\end{align}
where $O(1)$ is a constant independent of $x$ and $\pbQ$.
\end{proposition}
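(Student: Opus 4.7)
The plan is to recognize this as a conditional version of Levin's coding theorem. The hypothesis that $\sum_{x:f(x)=y} e^{-\pbQ(x)} \le 1$ for every $y \in \img f$ is precisely the statement that $q(x\vert y) := e^{-\pbQ(x)} \cdot \delta(f(x),y)$ defines a conditional semimeasure in $x$ given $y$. Our job is to show that this semimeasure is dominated by the universal conditional semimeasure $m(x\vert y) := 2^{-K(x\vert y)}$, with a dominance constant controlled by $K(\pbQ,f)$.

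First I would verify the effectivity of $q$. Because $f$ is computable, the predicate $f(x)=y$ is decidable, and because $\pbQ$ is upper-semicomputable, $e^{-\pbQ(x)}$ is lower-semicomputable. The product $q(x\vert y)$ is therefore a lower-semicomputable (conditional) semimeasure in $x$ given $y$. The crucial point is that a TM computing improving lower bounds on $q(\cdot\vert \cdot)$ can be described given a description of $\pbQ$ and $f$, so an algorithmic description of $q$ requires at most $K(\pbQ,f) + O(1)$ bits.

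Next I would invoke the standard universality property of $m(x\vert y)$: for every lower-semicomputable conditional semimeasure $q$, there exists a constant $c_q > 0$ such that $m(x\vert y) \ge c_q \cdot q(x\vert y)$ for all $x,y$, and moreover one may take $c_q = 2^{-K(q) - O(1)}$ where $K(q)$ is the length of the shortest program that enumerates $q$ from below (this is \cite[Thm.~4.3.3]{livi08} in its conditional form, which is the same as the unconditional argument by enumerating semimeasures in $x$ uniformly in $y$). Combined with the previous step, this yields
\begin{align*}
m(x\vert y) \;\ge\; 2^{-K(\pbQ,f) - O(1)} \cdot e^{-\pbQ(x)} \cdot \delta(f(x),y).
\end{align*}

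Taking $y=f(x)$ and $-\log_2$ of both sides gives $K(x\vert f(x)) \le \pbQ(x)/\ln 2 + K(\pbQ,f) + O(1)$, which rearranges to \cref{eq:bestres0}. The main obstacle is getting the dependence on the description length of $q$ right; one has to be careful that the $O(1)$ term is genuinely independent of both $x$ and the particular choice of $\pbQ$, which requires pointing to the effective version of the universality theorem (the constant in $m \ge c_q q$ decays only like $2^{-K(q)}$, not worse) rather than the plain statement that $m$ dominates every lower-semicomputable semimeasure. A minor auxiliary point is that $K(\pbQ,f)$ must be understood as the Kolmogorov complexity of the pair, which by the standard pairing function satisfies $K(\pbQ,f) \le K(\pbQ) + K(f) + O(1)$; this is what allows \cref{eq:qoptineq} to be stated as a sum of the two complexities.
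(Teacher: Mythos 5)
Your proposal is correct and follows essentially the same route as the paper: both arguments package the hypothesis into the conditional semimeasure $s(x\vert y)=e^{-Q(x)}\,\delta(f(x),y)$, observe that it is lower-semicomputable with description complexity at most $K(Q/kT,f)+O(1)$, and then apply the conditional coding theorem $K(x\vert y)\le -\log_2 s(x\vert y)+K(s)+O(1)$ before setting $y=f(x)$ and rearranging. One minor inaccuracy: since $f$ is a computable \emph{partial} function, the predicate $f(x)=y$ is only semi-decidable rather than decidable, but this is harmless because semi-decidability still yields lower-semicomputability of $s$ --- which is precisely what the paper's explicit construction (running the machine computing $f$ for $n$ steps and outputting $2^{-a(x,n)}$ only once it has halted on output $y$) makes rigorous.
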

\begin{proof}
\def\stepN{n}
Let $\T$ indicate the TM that computes $f$, and let $a(x,\stepN)$ be a computable partial function 
which upper-semicomputes $\pbQ(x) / \ln 2$. 
Then, define the following TM $B$:  
given inputs $x\in \BB$, $y\in \BB$, and $\stepN\in \mathbb{N}$, the TM $B$ runs $\T$ for $\stepN$ steps on input $x$. If $\T$ halts within that time on output $y$, then $B$ outputs $2^{-a(x,n)}$.  Otherwise, $B$ outputs 0 and halts. 

Then, for any $x \in \dom f$, define
\begin{align}
s(x|y) & := \lim_{\stepN \rightarrow \infty} \func{B}(\langle x,y\rangle, \stepN)\nonumber\\
& = \delta(f(x),y) 2^{-\pbQ(x) / \ln 2} 
\nonumber \\
& = \delta(f(x),y) e^{-\pbQ(x)} .
\label{eq:appA4}
\end{align}
It is easy to check that $\func{B}(\langle x,y\rangle, \stepN)$ is non-decreasing in $\stepN$, so $s(x|y)$ is lower-semicomputable   
(i.e.,  $\func{B}(\langle x,y\rangle ,n) \le \func{B}(\langle x,y\rangle,n+1)$ and $\lim_{n\to \infty} \func{B}(\langle x,y\rangle ,n) = s(x\vert y)$).   
Moreover, 
if one had a program that computed both $f$ and $\pbQ$, then one could lower-semicompute $s$. This means that 
\begin{align}
K(s) \le K(\pbQ,f) + O(1), 
\label{eq:appA5}
\end{align}
where $K(\pbQ,f)$ is the Kolmogorov complexity of jointly computing the functions $f$ and $\pbQ$. 

By assumption in \cref{eq:ineqapp9}, for any $y \in \img f$,
\begin{align}
\sum_{\mathclap{x \in \dom f}} s(x\vert y) = \sum_{\mathclap{x : f(x) = y}} 2^{-\pbQ(x)/\ln 2} = \sum_{\mathclap{x : f(x) = y}} e^{-\pbQ(x)}  \le 1 .
\label{eq:app.A.5}
\end{align}
This means that $s(x|y)$ is a so-called \termdef{conditional semimeasure} of $x$ given $y$ (i.e., a non-normalized conditional probability measure). For any lower-semicomputable conditional semimeasure $s$, an existing result in AIT~\cite[Cor.~2]{vitanyi2013conditional} states 
\begin{align*}
K(x\vert y) & \le -\log_2 s(x| y) + K(s) + O(1).\label{eq:appK0}
\end{align*}
Taking $y= f(x)$ and plugging in \cref{eq:appA4,eq:appA5} gives
\begin{align}
K(x\vert f(x))& \le \pbQ(x)/\ln 2 + K(\pbQ,f) + O(1).
\end{align}
\cref{eq:bestres0} follows by rearranging.
%
%
\end{proof}

Given that \cref{eq:ineqapp9} holds, by \cref{prop:second} there must be a realization of $f$ with heat function $Q(x) = kT G(x)$. By \cref{lem:applem2}, we can take $G(x) = \bitsToEntropy{ K(x \vert f(x)) }$. Thus, 
there must exists a realization of $f$ with heat function $\Qopt$, as defined in \cref{eq:appqopt}.

Combining $G(x) = \Q(x)/kT$ with \cref{eq:bestres0}, and multiplying both sides by $kT$, gives the following inequality, 
$$
\Q(x) \ge \Qopt(x) - \bitsToHeat{ K(\heatToEntropy{\Q},f) } + O(1) \,.
$$
We can derive a slightly weaker, but more interpretable, lower bound by using $K(\heatToEntropy{\Q},f) \le K(\heatToEntropy{\Q}) + K(f) + O(1)$, which follows from the subadditivity of Kolmogorov complexity~\cite[p.202]{livi08}.  This allows to rewrite the above as
$$
\Q(x) \ge \Qopt(x) - \bitsToHeatParen{ K(\heatToEntropy{\Q}) + K(f)} + O(1) ,
$$
which appears in the main text as \cref{eq:qoptineq}, with $f=\funcT$.

\section{Infinite expected heat}
\label{app:infiniteheat}

Let $\funcU$ be the partial function computed by some UTM $U$. 
In the following results, we will make use of the following decomposition of the drop of entropy, which holds for any initial distribution $\pinit$:
\begin{align}
S(\pinit) - S(\pfin) & = %
 \sum_{\mathclap{y \in \img \funcU}} \pfin(y) S(p_{X\vert \funcU(x)=y}). \label{eq:appdecomp}
\end{align}
Note that (discrete) Shannon entropy is non-negative, so $S(p_{X\vert \funcU(x)=y})\ge 0$ for all $y$. 
For simplicity, and without loss of generality, in this section we will write Shannon entropies in units of bits.

We will make use of the following lemmas.

\begin{lemma}
\label{lem:infapp}
For any $y\in \BB$, 
$$\sum_{\mathclap{x : \funcU(x) = y}} 2^{-\ell(x)} \ell(x) = \infty.$$
\end{lemma}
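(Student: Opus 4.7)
I would prove this by fixing $y\in\BB$ and constructing an infinite, prefix-free family of $U$-programs all producing $y$, chosen so that the family's own contribution to $\sum_{x:\funcU(x)=y}2^{-\ell(x)}\ell(x)$ already diverges. Because $U$ is a prefix TM and the family is contained in its halting set for $y$, this immediately yields the lemma. Note that universality of $U$ guarantees $y\in\img\funcU$ for every $y\in\BB$, so the sum is nonempty.

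The construction is as follows. Fix a self-delimiting (Elias-style) encoding $n\mapsto\bar n$ of the positive integers with $\ell(\bar n)=\log_2 n+O(\log\log n)$, and design a prefix TM $M_y$ that, on tape contents $\bar n\,z$ with $z\in\{0,1\}^n$, parses $\bar n$, then reads and discards the next $n$ tape bits, then writes $y$ and halts. Its halting set $\{\bar n\,z:n\ge 2,\,z\in\{0,1\}^n\}$ is prefix-free because $\{\bar n\}$ is. Universality of $U$ then yields $U$-programs (built from a fixed interpreter for $M_y$ and the input $\bar n\,z$) that reproduce $\funcM_y(\bar n\,z)=y$ with only a constant additional overhead $c=c(y)$ in length; call the resulting $U$-program $P_{n,z}$, so $\funcU(P_{n,z})=y$ and $\ell(P_{n,z})=c+\ell(\bar n)+n$. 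Since $U$ is itself a prefix TM, $\{P_{n,z}\}$ sits inside the prefix-free halting set $\{x:\funcU(x)=y\}$.

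There being $2^n$ choices of $z$ for each $n$, the family contributes
\begin{align*}
\sum_{n\ge 2}\sum_{z\in\{0,1\}^n} 2^{-\ell(P_{n,z})}\ell(P_{n,z}) \;=\; 2^{-c}\sum_{n\ge 2} 2^{-\ell(\bar n)}\bigl(c+\ell(\bar n)+n\bigr) ,
\end{align*}
and the $n\cdot 2^{-\ell(\bar n)}$ term is of order $1/(\log n)^2$ by the choice of encoding, so the sum diverges. Lower-bounding $\sum_{x:\funcU(x)=y} 2^{-\ell(x)}\ell(x)$ by this divergent contribution finishes the proof.

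The only delicate point is ensuring the family is truly prefix-free and that each $P_{n,z}$ genuinely halts with output $y$, which is handled by using a self-delimiting encoding for $n$ and by $M_y$ reading exactly $\ell(\bar n)+n$ tape bits before halting. Every other step is a direct estimate with self-delimiting codes.
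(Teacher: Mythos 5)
Your proof is correct and follows essentially the same strategy as the paper's: both lower-bound the sum by a family of programs consisting of a fixed header that makes $U$ output $y$, a self-delimiting code for an integer $n$, and $n$ junk bits to be swallowed, so that the $2^n$ choices of junk cancel the $2^{-n}$ factor and leave a divergent series $\sum_n n\,2^{-\ell(\mathrm{code}(n))}$. The only difference is the choice of integer code (your Elias-style code gives $\sum_n 1/(\log n)^2$ where the paper's simpler $2\lceil\log_2 n\rceil+1$ code gives the harmonic series), which is immaterial.
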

\begin{proof}
To derive this result, we make use of a simple prefix-free code for natural numbers $i\in \mathbb{N}$:
\begin{align}
g(i) =\underbrace{111...111}_{\text{\ensuremath{\lceil\log_{2}i\rceil\;1\text{s}}}}0\underbrace{1110...0110}_{\substack{\text{Encoding of \ensuremath{i}}\\
\text{with \ensuremath{\lceil\log_{2}i\rceil} bits}
}} .\label{eq:numbercode}
\end{align}
(See also \cite[Section 1.11]{livi08}.)
It is straightforward to check that this prefix-free code achieves a code length 
\begin{align}
\ell(g(i)) = 2\lceil \log_2 i\rceil + 1 .
\label{eq:ncodelen}
\end{align} 
In addition, we will also use programs of the form $z_y + g(i) + x$ such that $\funcU(z_y + g(i) + x) = y$, where $z_y$ is some appropriate prefix string, $g(i)$ is defined in \cref{eq:numbercode},  $x$ is any binary string with $\ell(x)=i$, and ``$+$'' indicates concatenation. In words, the program $z_y + g(i) + x$ causes $U$ to read in a code for $y$ (corresponding to $z_y$), then a prefix-free code for any $i \in \mathbb{N}$ (corresponding to $g(i)$), then ``swallow'' $i$ bits of input (corresponding to $x$), and halt after outputting $y$. Using \cref{eq:ncodelen}, it can be checked that 
\begin{multline}
i =\ell(x) < \ell(z_y + g(i) + x) = \ell(z_y) + \ell(g(i)) + i \\\le \ell(z_y) + 2 \log_2 i + 3 + i  \,.
\label{eq:codeineqs}
\end{multline}

We now bound the sum $\sum_{x : \funcU(x) = y} 2^{-\ell(x)} \ell(x)$. Since all terms in this sum are positive, we can lower bound it by focusing only on the subset of programs of the form $z_y + g(i) + x$:
\begin{align*}
\sum_{\mathclap{x : \funcU(x) = y} } 2^{-\ell(x)} \ell(x) &\ge \sum_{\mathclap{i\in \mathbb{N}, x: \ell(x) = i}} 2^{-\ell( z_y + g(i) + x )} \ell(z_y + g(i) + x) \\
& \stackrel{(a)}{\ge}  \sum_{\mathclap{i\in \mathbb{N}, x: \ell(x) = i}} 2^{-\ell( z_y) - 2 \log_2 i - 3- i } i \\
& = 2^{-\ell( z_y) - 3} \sum_{\mathclap{i\in \mathbb{N}, x: \ell(x) = i}} 2^{- 2 \log_2 i- i } i \\
& \stackrel{(b)}{=} 2^{-\ell( z_y) - 3} \sum_{\mathclap{i\in \mathbb{N}}} 2^i 2^{- i - 2 \log_2 i} i \\
& = 2^{-\ell( z_y) - 3} \sum_{\mathclap{i\in \mathbb{N}}} i/i^2 \\
& = 2^{-\ell( z_y) - 3} \sum_{\mathclap{i\in \mathbb{N}}} 1/i = \infty\,.
\end{align*}
In $(a)$, we use the lower and upper bounds on $\ell(z_y + g(i) + x)$ from \cref{eq:codeineqs}, and in $(b)$ we use  that there are $2^i$ different bit strings $x$ that obey $\ell(x) = i$.   The rest of the steps follow from rearranging and simplifying.
\end{proof}

%
\begin{lemma}
\label{lem:kbound}
For any computable partial function $f:\BB \to \BB$ and $x \in \dom f$, 
$$K(x) \le \ell(x) + O(1).$$
\end{lemma}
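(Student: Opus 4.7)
The plan is to exploit the fact that, since $f$ is computable, there is a (prefix) TM $M$ with $\funcT = f$, and consequently $\dom f$ is a prefix-free subset of $\BB$. This prefix-freeness is exactly what lets the UTM recognize the end of $x$ when reading it off the input tape, so that $x$ itself (up to a fixed overhead) can serve as a self-delimiting description of $x$.

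Concretely, I would proceed as follows. First, fix a prefix TM $M$ computing $f$, and fix the reference UTM $U$ used in the definition of $K$. Next, describe a short ``wrapper'' program $\sigma_M$ for $U$ which does the following: it maintains an internal buffer, and it simulates $M$ on the incoming input bits, copying each bit it consumes into the buffer. Because $\dom f$ is prefix-free, as soon as $M$ halts on the bits read so far, those bits constitute exactly one string in $\dom f$; the wrapper then writes the contents of the buffer to the output tape and halts. Thus, on input $\sigma_M x$ with $x \in \dom f$, the UTM halts with output $x$.

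From this construction I read off the length bound: the program $\sigma_M x$ has length $\ell(\sigma_M) + \ell(x)$, so
\begin{align*}
K(x) \;\le\; \ell(\sigma_M) + \ell(x) \;=\; \ell(x) + O(1),
\end{align*}
where the additive constant $\ell(\sigma_M)$ depends on $f$ (and on $U$) but not on $x$, yielding the claimed bound. A small verification that should be noted is that the concatenation $\sigma_M x$ is a legal self-delimiting program for $U$: since $U$ is a prefix UTM and $\sigma_M$ is designed to halt exactly when the simulated $M$ halts, the halting set of this wrapper is prefix-free, consistent with the prefix convention used throughout the paper.

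The main conceptual step (not really an obstacle, but the crux) is recognizing that prefix-freeness of $\dom f$, guaranteed by the standing assumption that all TMs in this paper are prefix TMs, removes the usual $O(\log \ell(x))$ ``length prefix'' overhead one would otherwise need to make $x$ self-delimiting. Without this assumption one would only get $K(x) \le \ell(x) + 2\log \ell(x) + O(1)$, so the key is to lean squarely on the prefix-TM setup established in \cref{sec:TMs} and \cref{app:TMs}.
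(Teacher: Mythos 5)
Your proof is correct and uses essentially the same idea as the paper: exploit the prefix-freeness of $\dom f$ so that $x$ itself serves as a self-delimiting description, with the constant overhead coming from the ``run $M$ on the incoming bits, echo them, and output the echo when $M$ halts'' machinery. The only difference is packaging --- the paper builds a bespoke prefix UTM with a one-bit flag and then invokes the invariance theorem, whereas you prepend a constant-length interpreter $\sigma_M$ to the fixed reference UTM --- and these are interchangeable.
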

\begin{proof}
Let $\T$ be a TM which computes $f$, and note that $\dom \funcT$ is a prefix-free set. 
Consider the Kolmogorov complexity $K_\U(x)$, which is defined in terms of a UTM $\U$ which operates in the following way: $\U$ takes inputs of the form $b + x$, where $b\in\{0,1\}$, $x\in \BB$ and ``$+$'' indicates string concatenation.  If $b=0$, then $U$ emulates some prefix UTM on input $x$ and outputs the result.  If $b=1$, then $U$ emulates $\T$ on input $x$ while swallowing the output; if and when $\T$ halts on input $x$, $U$ outputs a copy of the input $x$ and halts.  It is clear that $U$ is universal, due to its behavior when $b=0$, and that it is prefix-free.  It is also clear that $U$ has a program of length $\ell(x)+1$ that can be used to output any $x\in\dom \funcT$, due to its behavior when $b=1$. Thus, $K_U(x) \le \ell(x)+1$. The result follows by recalling the Invariance Theorem, $K(x) =K_U(x) + O(1)$, where $K(\cdot)$ is the Kolmogorov complexity defined for some arbitrary reference UTM.
\end{proof}

\subsubsection{Coin-flipping distribution}
In this section, we consider the coin-flipping input distribution, $\pcoin$, as defined in \cref{eq:coin0}. We show that the drop in entropy for this input distribution is infinite,
\begin{align}
S(\pcoin) - S(\pfincoin) = \infty \,.
\label{eq:appinf0}
\end{align}
Thus, by the second law of thermodynamics, \cref{eq:ep1_alternative}, any realization which carries out $U$ on $\pcoin$ must generate an infinite amount of heat.

To derive \cref{eq:appinf0}, first use \cref{eq:appdecomp} to write
\begin{align}
S(\pcoin) - S(\pfincoin)  = \sum_{\mathclap{y \in \img U}} \pfincoin(y) S(p_{X\vert \funcU(x)=y}^\mathrm{coin}).
\end{align}

We now show that $ S(p_{X\vert \funcU(x)=y}^\mathrm{coin}) = \infty$ for any $y\in\img \funcU$. First, write
\begin{align}
&S(p_{X\vert \funcU(x)=y}^\mathrm{coin}) = - \sum_{\mathclap{x : \funcU(x) = y}} p^\mathrm{coin}_{X\vert Y}(x\vert y) \log_2 p^\mathrm{coin}_{X\vert Y}(x\vert y) \nonumber \\
&\quad = \log_2 \uY(y) - \frac{1}{\uY(y)} \sum_{{x : \funcU(x) = y} } 2^{-\ell(x)} \log_2 2^{-\ell(x)}  \nonumber \\
& \quad = \log_2 \uY(y) + \frac{1}{\uY(y)} \sum_{{x : \funcU(x) = y} } 2^{-\ell(x)} \ell(x) \label{eq:appllline3}
\end{align}
where we use that $p^\mathrm{coin}_{X\vert Y}(x\vert y)=2^{-\ell(x)}/\uY(y)$ when $\funcU(x)=y$ (similarly to the derivation in \cref{sec:EF_coin_flipping}).  
Note that the multiplicative constant $1/\uY(y)$ is strictly positive, and the additive constant $\log_2 \uY(y)$ is finite.  Then, \cref{eq:appllline3} is infinite by \cref{lem:infapp}.

\subsubsection{EP optimal distribution for the dominating realization}

\def\aY{w_Y}
\def\C{C}

Consider any initial distribution of the form 
\begin{align}
\pinit(x) = \frac{\aY(\funcU(x))}{\C(\funcU(x))} 2^{-K(x\vert \funcU(x))} \,,
\label{eq:appdist0}
\end{align}
where $\C(y) := \sum_{x:\funcU(x)=y} 2^{-K(x\vert y)}$ is a normalization constant, and $\aY$ is any probability distribution over  $\img \funcU$. It can be verified, using results discussed in section \cref{app:EP}, that  any input distribution of the form \cref{eq:appdist0} achieves 0 mismatch cost for the dominating realization.  Thus, this distribution achieves minimal EP for the dominating realization.

In this section, we show that any input distribution of the form \cref{eq:appdist0}  also incurs an infinite drop in entropy
\begin{align}
S(\pinit) - S(\pfin) = \infty \,.
\label{eq:appinf0b}
\end{align}
Thus, by the second law of thermodynamics, \cref{eq:ep1_alternative}, any realization which carries out $\U$ on such an input distribution $\pinit$ must generate an infinite amount of heat.

Our derivation proceeds in a similar manner as that used above to show that the drop in entropy for $\pcoin$ was infinite.  First, use \cref{eq:appdecomp} to write
\begin{align}
S(\pinit) - S(\pfin)  = \sum_{y \in \img \funcU} \pfin(y) S(p_{X\vert \funcU(x)=y})
\end{align}

We derive \cref{eq:appinf0b} by showing that $ S(p_{X\vert \funcU(x)=y}) = \infty$ for any $y\in\supp \aY$.   First, write
\begin{align}
&S(p_{X\vert \funcU(x)=y}) = - \sum_{\mathclap{x : \funcU(x) = y}} p_{X\vert Y}(x\vert y) \log_2 p_{X\vert Y}(x\vert y) \nonumber \\ 
& \quad = \log_2 \C(y) + \frac{1}{\C(y)} \sum_{{x : \funcU(x) = y} } 2^{-K(x \vert y)} K(x\vert y) \label{eq:app15}
\end{align}
where we use that $p_{X\vert Y}(x\vert y)=2^{-K(x\vert y)}/\C(y)$ when $\funcU(x)=y$ and $\aY(y)>0$. 
To show that \cref{eq:app15} is infinite, we note that $\C(y) > 0$, and then focus on the inner sum 
\begin{align}
\sum_{\mathclap{x : \funcU(x) = y} }  2^{-K(x\vert y)} K(x \vert y)
\label{eq:appsum9b}
\end{align}
Note that any $x$ such that $\funcU(x)=y$ must obey $x\in \dom \funcU$. This means that 
\begin{align*}
K(x\vert y) \le K(x) + O(1) \le \ell(x) + O(1),
\end{align*}
where the first inequality comes from subadditivity of Kolmogorov complexity~\cite{livi08}, while the second comes from \cref{lem:kbound}. 
We will use $\kappa \ge 0$ to indicate some finite constant that makes the rightmost inequality hold. 

Now, note that $2^{-a} a$ is non-increasing in $a\in \N$ for all $a \ge 1$. Assume for the moment that there is no $x$ such that $\funcU(x)=y$ and $K(x\vert y) = 0 $. Then,
\begin{align*}
2^{-K(x\vert y)} K(x \vert y) &\ge  2^{-\ell(x)-\kappa }(\ell(x) + \kappa ) \ge 2^{-\kappa} 2^{-\ell(x) }\ell(x)
\end{align*}
for all $x$ such that  $\funcU(x)=y$. This gives the following lower bound for \cref{eq:appsum9b}:
\begin{align*}
\sum_{\mathclap{x : \funcU(x) = y} }  2^{-K(x\vert y)} K(x \vert y) \ge 2^{-\kappa} \sum_{\mathclap{x : \funcU(x) = y} } 2^{-\ell(x) }\ell(x) = \infty ,
\end{align*}
where the last equality uses \cref{lem:infapp}.  Now imagine that there is an $x$ such that $\funcU(x)=y$ and $K(x\vert y)=0$ (for any given $y$, there can be at most one such $x$). In that case, the above lower bound should be decreased by $2^{-\kappa} 2^{-\ell(x) }\ell(x)$, which is a finite constant, so \cref{eq:appsum9b} is still infinite.


\section{Strictly positive EP for the dominating distribution}
\label{app:posEP}

Consider any computable partial function $f$, and recall the decomposition of EP developed in \cref{app:EP}, into a non-negative ``mismatch cost'' (conditional KL) term and a non-negative ``residual EP'' term, \cref{eq:decomp2}.  The residual EP term is an expected over non-negative values $-\logZ(y)$ for $y \in \img f$.  

Using \cref{eq:zdef}, we write this residual term for the dominating realization as 
\begin{align*}
- \logZ(y) &= -\eplog \sum_{\mathclap{x : f(x)=y}} e^{-\heatToEntropy{\Qopt(x)}} = -\eplog \sum_{\mathclap{x : f(x)=y}} 2^{-K(x\vert y)},
\end{align*}
where we substituted in the definition of $\Qopt$ from \cref{eq:qoptdef}. Assume that the conditional Kolmogorov complexity is defined relative to some reference UTM $U$, $K(x\vert y)=K_U(x\vert y)$.  Then, consider the inner sum,
\begin{align*}
\sum_{{x : \funcU(x)=y}} 2^{-K_U(x\vert y)} &\le \sum_{{x \in \BB}} 2^{-K_U(x\vert y)} \\
& < \sum_{{(z,y) \in \dom \funcU}} 2^{-\ell(z)}  \le 1.
\end{align*}
The strict inequality comes from the fact that not all programs  $(z,y)\in\dom \funcU$ are the shortest program for some output string $x \in \BB$.  The last inequality comes from the Kraft inequality.  


This shows that for the dominating realization of a computable function $f$, $-\logZ(y)>0$ for all  $y \in \img f$. Thus, the residual EP term in \cref{eq:decomp2} is strictly positive for any input distribution.

\section{Derivation of \cref{eq:logbetter}}
\label{app:domcoin}

%
%
%
%
%
%
%
For a coin-flipping realization of some UTM $U$, \cref{eq:coinwork} 
states that the heat generated on input $x$ is given by 
\begin{align*}
\Qcoin(x) &= \bitsToHeatParen{ \ell(x) - K(\funcU(x))} + O(1)\\
&\ge \bitsToHeatParen{ K(x) - K(\funcU(x))} + O(1),
\end{align*}
where the second line uses
\cref{lem:kbound}. 
We now use the following inequality~\cite[Sec.~3.9.2]{livi08}:
\begin{align*}
&K(\funcU(x)) \\
& \le K(x,\funcU(x)) - K(x\vert \funcU(x))+ O(\log K(\funcU(x)))\\
& = K(x) - K(x\vert \funcU(x)) + O(\log K(\funcU(x))),
\end{align*}
where in the last line we've used that $K(x,\funcU(x))=K(x) + O(1)$ (since the value of $\funcU(x)$ is by definition computable from $x$). 
Combining the above results with the definition of $\Qopt$ gives the desired result,
\begin{align}
\Qcoin(x) \ge \Qopt(x) - O(\log K(\funcU(x))). 
\end{align}

\clearpage

\end{document}